\newcommand\validsig{\mathrel{\operatorname{\mathsf{sig}}}}
\newcommand\validctx{\mathrel{\operatorname{\mathsf{ctx}}}}
\newcommand\cotype{\operatorname{\mathsf{cotype}}}
\newcommand\type{\operatorname{\mathsf{type}}}
\newcommand\typecotype{\operatorname{\mathsf{(co)type}}}
\newcommand\kind{\operatorname{\mathsf{kind}}}
\newcommand{\m}[1]{\operatorname{\mathsf{{#1}}}}
\newcommand{\mi}[1]{\mathop{\operatorname{\mathit{{#1}}}}}
\newcommand{\mt}[1]{\mathop{\operatorname{\mathtt{{#1}}}}}
\newcommand{\pitype}[3]{\ensuremath{\Pi {#1} : {#2} .\, {#3}}}
\newcommand\hra\hookrightarrow
\newcommand{\pdepth}[1]{_{({#1})}}
\newcommand{\depth}[1]{\pdepth{#1}}
\newcommand{\dk}{\depth{k}}
\newcommand{\dkpo}{\depth{k+1}}
\newcommand\erased[1]{{{#1}^o}}
\newcommand{\enc}[1]{\ulcorner {#1}\urcorner}
\newcounter{numbered}
\newenvironment{numbered}{
    \setcounter{numbered}{0}
    }{
    \setcounter{numbered}{0}
    }
\newcommand{\previousNumber}{\the\numexpr\value{numbered}-1\relax }
\newcommand{\snil}{\operatorname{\mathsf{()}}}
\author{Zhibo Chen}
\email{zhiboc@andrew.cmu.edu}
\affiliation{%
  \institution{Carnegie Mellon University}
  \country{USA}
}
\title{A Logical Framework with Infinitary Terms}
\begin{document}

\begin{abstract}
Logical frameworks are successful in modeling proof systems.
Recently, CoLF extended the logical framework LF to 
support higher-order rational terms that enable adequate encoding of circular objects and derivations. 
In this paper, we propose CoLF$^\omega$ as
an alternative interpretation of CoLF-style signatures where terms are taken to be 
all possibly infinitary terms that are consistent with a given signature. In particular, 
we propose the notion of productive B\"ohm trees, 
a particular kind of typed $\bot$-free B\"ohm trees that are 
closed under hereditary substitution. 
We show that the productive B\"ohm trees are capable of meta-encoding their own structure.
Overall, we hope to establish CoLF$^\omega$ as a new formal framework for the encoding of 
infinitary regular and non-regular structures.

\end{abstract}
\maketitle

    
\section{Introduction}

Infinite objects are representable in the logical framework LF by indexing a type family with 
a natural number as its observation depth. For example, in the following signature, 
the stream of natural numbers whose first $k$ elements can be observed 
is in compositional bijection with the canonical terms of the type family $\mt{stream}\, (\mt{succ}^{k}\, \mt{zero})$.
\begin{verbatim}
nat : type.
zero : nat.
succ : nat -> nat.
stream : nat -> type.
unobservable : stream zero.
cocons : {k : nat} nat -> stream k -> stream (succ k).
\end{verbatim}

The encoding is hard to work with, because the observation depth of the stream needs to be tracked everywhere a stream is used.
CoLF \cite{Chen23fossacs} is an extension of the logical framework LF that supports natural and adequate encodings of 
circular objects and circular derivations. To 
make type checking decidable,
CoLF limits its term model to higher-order rational terms. This limitation has the 
shortcoming that
objects without a regular structure cannot be adequately represented in CoLF.
For example, the stream of natural numbers with repeating $1$'s and $2$'s,
$1, 2, 1, 2, \dots$, can be encoded in CoLF because it has a regular structure, i.e. 
the stream can be given by the equation $S = 1,2, S$. The stream of natural numbers 
counting up from $1$, $1, 2, 3, 4, \dots$, cannot be encoded in CoLF because it does not have a regular structure, 
i.e. the stream cannot be given by a system of equations.
In this paper, we develop a new type theory CoLF$^\omega$, which provides an alternative term model 
for CoLF-style signatures where terms are taken to be all possibly infinitary terms.
Many more interesting infinitary objects can be encoded in CoLF$^\omega$ than in CoLF.



The main contributions of this paper are:
\begin{itemize}
    \item 
    A formulation of infinitary syntax trees (Section~\ref{sec:infinitary_syntax_trees}).
    \item
    A definition of productive B\"ohm trees via the infinitary syntax trees (Section~\ref{sec:productive_bohm_trees}).
    \item 
    The type theory of CoLF$^\omega$, whose terms are productive B\"ohm trees (Section~\ref{sec:the_type_theory_of_colf_omega}). 
    \item 
    An interpretation of (adapted) finitary signatures of CoLF into CoLF$^\omega$ (Section~\ref{sec:interpretation_of_colf_signatures}).
    \item
    A meta-encoding of the productive B\"ohm trees using CoLF$^\omega$ signatures (Section~\ref{sec:encoding_productive_bohm_trees}).
    \item 
    A case study on co-natural numbers and co-binary numbers using CoLF$^\omega$ (Section~\ref{sec:conatural_numbers_and_cobinary_numbers}).
\end{itemize}


\section{Examples of CoLF$^\omega$}
\label{sec:examples_of_colf_omega}

We illustrate informally the infinitary term model of  CoLF$^\omega$, and how 
it is different from the rational term model of CoLF.

\subsection{Streams}

Consider the following CoLF signature for defining streams of natural numbers.

\begin{verbatim}
nat : type.
zero : nat.
succ : nat -> nat.
stream : cotype.
cocons : nat -> stream -> stream.
\end{verbatim}

In CoLF, the only terms are \emph{rational terms}, i.e. terms having finitely many subterms up to 
renaming of free and bound variables. Canonical terms of type \verb$stream$ are rational.
As a consequence, we can only represent rational streams (streams with finitely many distinct repeating patterns) in the framework.
A stream that counts up from a certain natural number or a stream that enumerates all 
Fibonacci numbers is not a term of type \verb$stream$ in CoLF.
However, in CoLF$^\omega$, 
all streams are infinitary terms consistent with the signature. 
That is,  the canonical terms of type \verb$stream$ include all possible streams, and there are uncountably many of them.

 There is a question of 
whether noncomputable streams are also represented in the canonical terms.
For example, temperature readings from a measurement device can be a stream of 
natural numbers that is not computable. 
We choose to leave open the question of computability intentionally in the hope that 
the development of CoLF$^\omega$ can be used to encode either computable or noncomputable objects, 
as long as the choice is made consistently.

While the canonical terms of type \verb$stream$ can be any stream, we may specify the streams we actually care 
about using predicates.
For instance, we can specify a stream that counts up from 
the natural number \verb$n$ by saying that the stream \verb$S$ is a term of type \verb$stream$ such that 
\verb$up n S$ holds, where \verb$up$ is the predicate defined below.

\begin{verbatim}
up : nat -> stream -> cotype.
up/def : {N : nat} {S : stream} up (succ N) S
    -> up N (cocons N S).
\end{verbatim}

A term of type \verb$up N S$ must be a term of the form \verb$up/def N (cocons N S') U$, where \verb$S = cocons N S'$, and \verb$U$ 
is a term of type \verb$up (succ N) S'$.
In fact, there is a unique inhabitant of type \verb$up N S$ for each $N$.
We show an example of a term of type \verb$up zero S$ where \verb$S$ is required to be a stream that counts up from $0$.
To reduce visual clutter, we write \verb$N :: S$ for \verb$cocons N S$, \verb$0$ for \verb$zero$, \verb$1$ for \verb$succ zero$, etc.
We have
\begin{verbatim}
up/def 0 S : up 1 S -> up 0 (0 :: S).
up/def 1 S : up 2 S -> up 1 (1 :: S).
up/def 2 S : up 3 S -> up 2 (2 :: S).
...
\end{verbatim}
Then,
\begin{verbatim}
m : up 0 (0 :: 1 :: 2 :: 3 :: ... ) = 
    up/def 0 (1 :: 2 :: 3 :: ...)
        (up/def 1 (2 :: 3 :: ...)
            (up/def 2 (3 :: ...)
                (up/def 3 ... ...
                )
            )
        ).
\end{verbatim}

The term $\mt{m}$ is not typeable in CoLF exactly because it is not a rational term: 
the set of its subterms contains $\mt{succ}^n \, \mt{zero}$ for every natural number $n$.

\subsection{Fibonacci Sequences}

We could define complex infinitary streams where 
the later parts of the stream depend on earlier parts.
Let \verb$fib n m S$ denote the stream of the Fibonacci sequence whose previous two 
numbers are $n$ and $m$. The whole Fibonacci stream starting with $1$ would be the stream $S$ 
such that the type \verb$fib zero (succ zero) S$ is inhabited.

\begin{verbatim}
fib : nat -> nat -> stream -> cotype.
fib/def : add X Y Z
    -> fib Y Z S
    -> fib X Y (cocons Z S).
\end{verbatim}

The \verb$add X Y Z$ predicate is defined inductively and is inhabited if $\mt{X + Y = Z}$.

\subsection{Real Numbers}
\label{sec:real_numbers}

A bit stream $b_0, b_1, \dots$ can represent the binary expansion of a real number in $[0,1]$
\[b_0, b_1, \dots \rightsquigarrow \Sigma_{i=0}^{\infty}(b_i \cdot \frac{1}{2^{i+1}})\]
\begin{verbatim}
bitstream : cotype.
b0 : bitstream -> bitstream.
b1 : bitstream -> bitstream.
\end{verbatim}

For example, the real number $0.101010\dots$ (binary decimal expansion) 
can be represented by the bit stream $1, 0, 1, 0, 1, 0 \dots$ whose encoding in CoLF$^\omega$
is  \verb$b1 (b0 (b1 (b0 (b1 (b0 ...)))))$.
Note that this real number is actually a rational number and thus is representable 
in CoLF as the term 

\verb$n : bitstream = b1 (b0 n)$. 

An example of an irrational real number would be the number, $0.1010010001\dots$, and it can be 
represented in CoLF$^\omega$ as the following infinitary term.

\verb$m : bitstream = b1 (b0 (b1 (b0 (b0 (b1 (b0 (b0 (b0 (b1 ...)))))))))$.

This number is not representable in CoLF because the corresponding term is not a rational term.
It is easy to see that rational numbers correspond to rational terms, 
and irrational numbers correspond to irrational terms.
In summary,
CoLF$^\omega$ can represent all real numbers \footnote{
As with streams, we leave the issue of whether only computable reals 
are represented or all reals are represented as a decision that 
the user of CoLF$^\omega$ can make. The reader is referred to \citet{Bauer05cca}
for a discussion of the computability of real numbers.}
whereas 
CoLF can only represent rational ones between $0$ and $1$.

\section{Infinitary Syntax Trees}
\label{sec:infinitary_syntax_trees}

We give an account for infinitary syntax trees, which serve as the 
technical device for defining productive B\"ohm trees.

The concept of observation is central to infinitary structures. A finitary structure can be observed in its totality
with a single observation, whereas an infinitary structure cannot be observed in its totality with a single observation.
The number of remaining steps that a term may be observed is called the \emph{observation depth} and is
written using a number subscript in parentheses. 
Syntax categories will always have an observation depth attached. For example, 
we write $M\depth k$ for a term with observation depth $k$, and $A\depth k$ for a type with observation depth $k$.
When we write down the grammar for a possibly infinitary term, 
the \emph{infinitary grammar} will specify the cases when the term undergoes a single step of observation. 
Concretely, we specify the grammar for observation depth $k+1$ in terms of the grammar for terms of observation depth $k$.
That is, given a syntax category $M$, the grammar for $M\depth{k+1}$ is specified in terms of $M\depth{k}$ for coinductive definitions, 
and is specified in terms of $M\depth{k+1}$ for inductive definitions.
The grammar may be mutually recursive, in that the grammar for $M\depth{k+1}$ may be specified in terms of $M'\depth k$ or 
$M'\depth{k+1}$ where $M'$ is another syntax category.
We also assume a universal base case with observation depth $0$ for all syntax categories.
For example, we write $M\depth 0$ for an unobservable term and $A\depth0$ for an unobservable type.
The depth $\omega$ is used for non-finite depth. For example, 
we write $M\depth \omega$ for a term that can be observed indefinitely
and write $A \depth \omega$ for a type that can be observed indefinitely.
We may sometimes omit the depth $\omega$ annotation for a syntax category, e.g.
we may just write $M$ for $M\depth \omega$, and $A$ for $A \depth\omega$.


We illustrate our use of the infinitary grammar through a series of examples. The reader should be reminded 
that we are making infinitary structures directly and formally precise.

\begin{enumerate}
    \item 
Natural Numbers

Inductive grammars are used to define finitary structures. The inductive nature is 
exemplified by the fact that the grammar for all structures is defined at the current observation depth.

We have the definition for natural numbers $N$, transcribed from the usual inductive definition.

\[N\depth{k+1} ::= 0 \mid S\, N\depth{k+1}\]

The grammar specifies that in a single observation, a natural number $N$ is either $0$, or the successor of another natural number $N'$, 
where $N'$ must be observed in the same observation. Given the finite nature of the observation, 
a natural number is a series of $S$'s followed by $0$.

\item Conatural Numbers

A slightly modified grammar defines the conatural numbers $C$.
\[C\depth{k+1} ::= 0 \mid S\, C\depth{k}\]

The grammar specifies that in a single observation, a conatural number $C$ is either $0$ or the successor of 
another conatural number $C'$ where $C'$ must be observed later, because it has one less observation depth.

\item Bit streams

A stream of bits $\mi{BS}$ may be defined by the following grammar:
\[{\mi{BS}}\depth{k+1} = b0, {\mi{BS}}\depth{k} \mid b1, {\mi{BS}}\depth{k}\]

The grammar specifies that an observation of a bitstream is either the zero bit $b0$, or the one bit $b1$, 
followed by another bitstream that must be observed in subsequent steps.

\item 
Binary number 

A binary number $\mi{BN}$, is a bit stream of finite length, where the least significant bit is listed first. 
A binary number 
can be represented using the following grammar:
\[ {\mi{BN}}\depth{k+1} ::= b0, {\mi{BN}}\depth{k+1} \mid b1, {\mi{BN}}\depth{k+1} \mid \epsilon\]

The grammar specifies that an observation of a binary number will reveal that either it is empty, or a bit ($b0$ or $b1$) followed 
by another binary number that must be observed in the same observation. Because an observation may only reveal 
a finite amount of information, a binary number cannot have an infinite number of bits before $\epsilon$.

\item Finitely-padded streams

A finitely-padded stream \cite{Chen21ms,Chen23fossacs} (a.k.a. left-fair streams \cite{Basold18phd}) is a stream of natural numbers with a finite number of padding between any two numbers.
The grammar below specifies a coinductive finitely-padded stream $\mi{PS}$ and an inductive padding $\mi{P}$ defined recursively.

\[{\mi{PS}}\depth{k+1} ::= N\depth{k+1}, P\depth{k+1} \]
\[P\depth{k+1} ::= P\depth{k+1} \mid  {\mi{{PS}}}\depth{k}\]

The grammar specifies that a single observation on a finitely-padded stream will reveal it is a natural number, followed by a padding, both of which 
must be observed in the same observation. An observation on padding will reveal that it is either another padding, in which 
case this other padding must be further observed, or a finitely-padded stream, in which case the stream must be observed in the next observation.
Overall, an observation on a stream will reveal it is a natural number followed by a finite amount of padding, and then followed by a stream 
that must be observed in the next observation.

\item Different kinds of infinite $\lambda$-terms
 
\citet{Kennaway97}, and \citet{Barendregt09ic} observed three formulations of infinite lambda trees 
that have wide applications. They are B\"ohm trees ($BT$), L\'evy-Longo trees ($LLT$), and Berarducci trees ($BeT$). 
The essential difference is that B\"ohm trees may not contain infinite chains of applications or abstractions, 
L\'evy-Longo trees may contain infinite chains of abstractions but not infinite chains of applications, 
and Berarducci tress may contain both infinite chains of abstractions and applications.
All trees may not contain $\beta$-redexes.

Perhaps the easiest among the three is the grammar for Berarducci trees ($BeT$) as specified below. 
The grammar is broken into canonical terms $\mi{BeT}$ and neutral terms $\mi{BeT_{APP}}$. An observation
of a $\mi{BeT}$ tree will reveal that it is $\bot$, or an abstraction, whose subterm shall be observed in the next step, 
or a neutral term that must be observed in the same observation, 
while an observation of a neutral $\mi{BeT_{APP}}$ tree will reveal that it is either a head variable, or an application 
where each subterm must be observed in a subsequent observation.
\[{\mi{BeT}}\depth{k+1} ::= \bot \mid \lambda x. {\mi{BeT}}\depth{k} \mid ({\mi{BeT_{APP}}})\depth{k+1}\]
\[{\mi{BeT_{APP}}}\depth{k+1} ::=  x \mid (BeT_{APP})\depth{k}\, (BeT_2)\depth{k}\]

The grammar of L\'evy-Longo trees differs from Berarducci trees in that if the observation reveals an 
application, the observation must continue into the argument subterm, thereby disallowing infinite chains of applications.
\[{\mi{LLT}}\depth{k+1} ::= \bot \mid \lambda x. {\mi{LLT}}\depth{k} \mid ({\mi{LLT_{APP}}})\depth{k+1}\]
\[{\mi{LLT_{{APP}}}}\depth{k+1} ::=  x \mid (LLT_{APP})\depth{k+1}\, (LLT_2)\depth{k}\]

The grammar of B\"ohm trees has a further restriction that if the observation reveals $\lambda$-abstraction,
then the observation must continue into its body, thereby disallowing infinite chains of abstractions.
A single observation of a B\"ohm tree will reveal all abstractions, all applications, and finally the head variable.

\[{\mi{BT}}\depth{k+1} ::= \bot \mid \lambda x. {\mi{BT}}\depth{k+1} \mid ({\mi{BT_{APP}}})\depth{k+1}\]
\[{\mi{BT_{APP}}}\depth{k+1} ::=  x \mid (BT_{APP})\depth{k+1}\, (BT_2)\depth{k}\]

\end{enumerate}

In summary, infinitary syntax trees provide a formal foundation for infinitary structures, by stratifying 
an infinitary term into distinct chunks of observations. The distinct chunks are delineated through the 
concept of an observation depth.

\subsection{Equality}
\label{sec:equality_infinitary_syntax_tree}

We say that two potentially infinite syntax trees of observation depth $k$ are equal \textit{up to}  depth $k$, (notation $=\depth k$)
iff the observation of two terms up to depth $k$ does not reveal a difference between those two terms.
That is, $M\depth k = \depth k M' \depth k$ if the first $k$ observations of $M$ and $M'$ do not reveal a 
difference between them.

We always have the trivial case that $M\depth 0 = \depth 0 M' \depth 0$, that is, two terms are trivially equal because
the first zero observations of the two terms will not reveal a difference between them.
Given an infinitary grammar, the equality at depth $k+1$ can always be defined structurally. 
As an example, given the grammar for conat, 
\[C\depth{k+1} ::= 0 \mid S\, C\depth{k}\]
we define the equality by the following rules:
\begin{enumerate}
    \item (Trivially) 
    $C\depth 0 = \depth 0 C'\depth 0$

    \item $0 =\depth{k+1} 0$

    \item $S\, C\depth{k} =\depth{k+1} S\, C'\depth{k}$ if $C\depth{k} =\depth{k} C'\depth{k}$.

\end{enumerate}
Here, the first rule says that
    two unobservable terms are equal up to depth $0$.
    The second rule says that
    if an observation (on terms with depth $k+1$) reveals that both terms are zero, then they are equal up to depth $k+1$.
    The third rule says that
    if an observation on terms with depth $k+1$ reveals that the left-hand side is the successor followed by a term $C\depth k$ of depth $k$, 
    and the right-hand side is the successor followed by a term $C'\depth k$ of depth $k$, then 
    the terms of depth $k+1$ are equal up to depth $k+1$ if $C\dk$ and $C'\dk$ are equal up to depth $k$.

One may wonder if equality could be defined on terms with different observation depths, and we answer 
that because of the nature of observation, a term of any observation depth may be viewed as a term of a lesser observation depth 
by definition. That is, given a term $M\depth k$, we can construct a term $M\depth{j}$ with $j < k$ that mimics the behavior of $M\depth{k}$ 
for the first $j$ steps.
Therefore, the definition of equality on heterogeneous depths is not necessary.

As with the convention that we write $M\depth \omega$ or simply $M$ for terms of infinitary observation depth, 
we write $=\depth \omega$ or simply $=$ for equality relation on those infinitary terms.

\section{Productive B\"ohm Trees}
\label{sec:productive_bohm_trees}

The logical framework methodology establishes a bijective correspondence between the structures 
that we would like to encode and the terms of the logical framework.
In the case of LF logical framework, deductions are represented by dependently-typed $\lambda$-terms 
\cite{Harper93jacm,Harper07jfp}.
The dependently-typed $\lambda$-terms are just simply-typed $\lambda$-terms when 
the type annotation for $\lambda$-abstractions are erased \cite{Watkins02tr}.
The simply-typed $\lambda$-terms have two crucial properties that make it a suitable target for a logical framework.
First, every term has a $\beta$-normal-$\eta$-long form, which provides a basis for term equality modulo $\beta\eta$-conversion.
Second, the normal forms of the terms are closed under hereditary substitution, 
thereby enabling the higher-order encoding strategies. When infinitary structures become the target of the encoding, 
infinitary $\lambda$-terms become the natural choice for the term model of the logical framework.

None of the typed versions of the three kinds of infinitary $\lambda$-terms have our desired properties. 
First, they all contain the unsolvable term $\bot$, which has no place in the encoding of infinitary structures.
Even if the $\bot$ was removed from their structure, the term structures are not closed under hereditary substitution, 
(i.e. substitutions followed by $\beta\eta$-normalization). To see this, consider the 
term $F = \lambda x.\, x \, (x\,  (x \, (...)))$, and $F$ could be assigned 
the simple type $(* \to *) \to *$, where $*$ is a base type. Let the term $I$ denote the identity function, $I = \lambda z. \, z : * \to *$, 
we see that the term $F\, I$, or the substitution $[F/y](y \, I)$, is $I^\omega = I\, (I\, (I\, \dots))$.
This term does not normalize to a head normal form \cite{Barendregt09ic}.

We formulate the notion of typed productive B\"ohm trees as a subclass of B\"ohm trees, 
with constants and without $\bot$, that 
are closed under hereditary substitution. 
First, we add constants (or constructors) to B\"ohm trees by fixing an infinite set of variable names to serve as constant names. 
Those variables 
are subsequently called constants (syntax category $c$). When 
constructing a $\lambda$-abstraction, the binder name will never be one of the constant names, 
and constants never vary under substitution.
We also adopt the notion of head-spine form \cite{Watkins02tr} for iterative applications. For example, $x\, M_1\, M_2\, M_3$
is written $x\cdot (M_1; M_2; M_3)$. The infinitary grammar for productive B\"ohm trees is given below.

\begin{center}
    \begin{tabular}{lrcl}
        Canonical terms  &  $M\pdepth{k+1} $,$N\dkpo$ & $ ::= $ & $\lambda x.\, M\pdepth{k+1} \mid R\pdepth{k+1}$ \\
        Neutral terms   &  $R\pdepth{k+1} $ & $ ::= $ & $x \cdot T\pdepth{k+1} \mid  c\cdot S \pdepth{k+1}$\\
        Continuing Spines   & $T\pdepth{k+1}$ & $ ::= $ & $\snil \mid M\pdepth {k+1} ; T\pdepth{k+1}$ \\
        Suspended Spines  & $S\pdepth{k+1}$ & $ ::= $ & $\snil \mid M\pdepth {k} ; S\pdepth{k+1}$ \\
    \end{tabular}
\end{center}

The difference between productive ($\bot$-free) B\"ohm trees and non-productive $\bot$-free B\"ohm trees that 
in a single observation of productive B\"ohm trees, the arguments following a variable head must be observed in the 
same observation whereas in non-productive $\bot$-free B\"ohm trees, the arguments are always observed in a subsequent observation.
In other words, in productive B\"ohm trees, only when we encounter constants do we halt the current observation and defer the arguments to 
the next observation. The presence of constants gives rise to the notion of productivity, defined by a condition on the infinite traces.

A trace is a possibly infinite list of head variables or constants where each element is the head of a direct child 
of the preceding element.
Formally, the set of possibly infinite traces of $M\depth \omega$, $\m{traces}(M\depth\omega)$ is defined to be the following, where $h$ 
is either a variable or a constant.
\[
    \m{traces}(\lambda x_1. \dots \lambda x_l.\, h \cdot (M_1; \, \dots ;M_n)) =
    \begin{cases}
        \{h\} & \text{ if } n = 0\\
     \{h, T \mid T \in \cup_i\{\m{traces}(M_i)\}\} & \text{ if } n > 0
    \end{cases}
    \]

We say that a term is \textit{productive} if there are infinitely many occurrences of constants along each infinite trace of the term.
If we observe along any trace in a productive infinite term, we will always encounter a constructor in a finite number of steps. 
There can be only finitely many variables between any two 
constants along any infinite trace through a productive term $M$. 
For example, the term 
$Z = \lambda x. \, x \cdot (\lambda y. \, y \cdot (\lambda z. \, z \cdot (\dots)))$
is not productive as its only infinite trace is $x, y, z, \dots$ consisting of only variables. 
The term $F = \lambda x.\, x \cdot (x\cdot (x \cdot (...)))$ is not productive because 
its only trace, which is infinite, is $x, x, x, \dots$, and this trace consists of only variables.
The term $I = \lambda x. \, x$ is trivially productive because there are no infinite traces. 
The term $\lambda x. \, c \cdot (c \cdot (c \cdot (\dots)))$ is productive because 
the trace $c, c, c, \dots$ contains infinitely many occurrences of the constructors $c$. 
A similar reasoning shows that 
the term $\lambda x. \, c \cdot (x \cdot (c \cdot (x \cdot ( c \cdot (\dots)))))$ 
and the term $\lambda x. \, c \cdot (x \cdot (c \cdot (x \cdot (x \cdot (c \cdot (x \cdot (x \cdot (x \cdot (c \cdot (\dots))))))))))$ 
are both productive.

We show that the grammar for productive B\"ohm trees directly corresponds to the notion of productivity.
\begin{theorem} [Productivity]
    We have

    (1) Every canonical term is productive.

    (2) Every $\bot$-free productive B\"ohm tree is a canonical term.
\end{theorem}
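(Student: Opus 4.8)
The plan is to isolate a single structural invariant shared by both directions. Call a node of a term a \emph{constant node} if its head is a constant, and define the \emph{observation region} rooted at a node to be the set of descendants reachable by passing only through $\lambda$-binders and through the arguments of variable-headed applications, stopping (as a leaf) at every constant node and every nullary variable head $x \cdot \snil$. By inspection of the grammar this region is exactly the part revealed in a single observation: the depth-preserving productions ($\lambda x.\, M\dkpo$, the variable head $x \cdot T\dkpo$, and the continuing spine) generate the region, while the unique depth-decreasing production is the suspended spine under a constant head. I would state as a shared lemma that \emph{for canonical terms the observation region is finite at every node}, and use it in both halves.

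For (1), I would first prove that lemma. Reading the depth-preserving productions inductively — they build $M\dkpo$ from strictly smaller uses of $M\dkpo$ together with the depth-$k$ material under constants — every derivation of $M\dkpo$ applies these productions only finitely often before reaching a nullary variable head or a constant head; hence the observation region rooted at any node is a finite tree. I would then argue by contradiction: given an infinite trace $h_0, h_1, \dots$ with only finitely many constants, let $h_j$ be its last constant. The suffix $h_{j+1}, h_{j+2}, \dots$ traverses no constant node, so it is an infinite path confined to the observation region of the subterm reached immediately after $h_j$, contradicting finiteness of that region. Thus every infinite trace carries infinitely many constants, i.e. the term is productive.

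For (2), given a $\bot$-free productive B\"ohm tree $M$ (the constant-augmented, $\bot$-free variant), I would assign observation depths by structural recursion exactly as the productive grammar dictates: arguments of a variable head inherit the current depth through a continuing spine, and arguments of a constant head drop one depth through a suspended spine. The obligation is that each resulting stratum $M\dkpo$ is generated by a \emph{finite} derivation of the depth-preserving productions, which reduces precisely to finiteness of each observation region. I would obtain this from K\"onig's lemma: the region is finitely branching, since B\"ohm trees forbid infinite abstraction chains and infinite application spines (so each node has one child under a $\lambda$ and finitely many under a variable head), and it has no infinite branch, since an infinite branch of the region is exactly a constant-free infinite trace, which productivity rules out. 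A finitely-branching tree with no infinite branch is finite, so the stratum is legal; applying this at every node shows $M$ is generated by the canonical-term grammar.

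The main obstacle is the bookkeeping that binds the syntactic depth stratification to the trace-based notion of productivity. I must make explicit that the depth-preserving productions are read inductively (this is what forces finite observation regions), while the constant production is the sole coinductive step; and in (2) I must check that the recursively assigned depths are globally consistent across the infinite tree rather than merely locally plausible. The appeal to well-foundedness in (1) and to K\"onig's lemma in (2) are the two places where an error could hide, so I would factor the finiteness of the observation region out as a standalone lemma invoked in both directions.
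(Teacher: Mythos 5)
Your proposal is correct and follows essentially the same route as the paper's proof: both directions rest on the fact that a single observation (your ``observation region'') is a finite chunk of the term, with constant heads as the sole depth-decreasing points. The only difference is one of explicitness --- you factor the finiteness of the region out as a standalone lemma and name the K\"onig's-lemma step in (2) (finitely branching by the B\"ohm-tree restrictions, no infinite branch by productivity, hence finite), which the paper's sketch compresses into the phrase ``the traces must be finitary because of the productivity condition.''
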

\begin{proof}
    (1)  Given a canonical term  $M\depth\omega$, we show that there can only be finitely many variables between two constants on an infinite trace of $M$.
    For any $c \cdot S$ which is a subterm of $M$, an observation (which is always finitary) of $M$ will either involve $S$ or not.
    If it involves $S$, another constant has been encountered on this trace. If it does not involve $S$, then there is no infinitary trace because 
    observations are always finitary.

    (2) Given a B\"ohm tree $T$ that is productive, we show that $T$ can be stratified into distinct chunks of observations, and thereby $T$ is a canonical term.
    Starting with the root of $T$, the first chunk of observation will be along all the traces starting with the root of them and ending with a constant. 
    The traces must be finitary because of the productivity condition. The stratification can be repeated for each child term in the spines of the constants.
    
\end{proof}

\subsection{Hereditary Substitution}

The notion of hereditary substitution \cite{Watkins02tr,Harper07jfp} is 
used to define substitution on canonical terms in a typing-agnostic way.
The definition of hereditary substitution does not require the argument
terms to be well-typed in a dependently typed setting. 
In this way, we break the circular dependency between typing and substitution
in a dependent type theory.
The substitution is well-defined 
as long as a correct simple type of the argument term is provided.

The simple types $\tau$ are inductively 
defined by the following grammar.
\footnote{
    Using the syntax tree described in this paper, the grammar definition should be understood as $\tau\depth{k+1} ::= * \mid (\tau_1)\depth{k+1} \to (\tau_2)\depth{k+1}$.
    In subsequent discussions when we say $M\depth{k+1}$ has type $\tau_1 \to \tau_2$, it should be understood formally as 
    $M\depth{k+1}$ has type $(\tau_1)\depth{\omega} \to (\tau_2)\depth{\omega}$.
    For purely inductive definitions, the entire structure of the term can be revealed in a single observation,
    and
    we choose to omit the depth annotations completely for purely inductive definitions to reduce the visual clutter.
}
\[\tau ::= * \mid \tau_1 \to \tau_2\]

The hereditary substitution $[N\depth k/x]^\tau M\depth k$ is defined as along as 
$\Delta \vdash N\depth k : \tau$, where $\Delta = h_1 : \tau_1, \dots, h_n : \tau_n$ 
is a mapping from constants and variables to their simple types. Here, head $h$ refers to either $c$ or $x$.
The judgment is defined by induction on $k$ and the structure of $N$.

\boxed{\Delta \vdash N\depth k : \tau}
\begin{mathpar}
\inferrule{
}{
    \Delta \vdash  N\depth {0} : \tau
}

\inferrule{
    \Delta, x : \tau_1 \vdash N\depth {k+1} :  \tau_2
}{
    \Delta \vdash \lambda x. \, N\depth {k+1} : \tau_1 \to \tau_2
}

\inferrule{
    x : \tau_1 \to \dots \to \tau_n \to \tau \in \Delta \\
    \forall _{i, 1 \le i \le n}. \Delta \vdash (N_i)\depth {k+1} : \tau_i
}{
    \Delta \vdash x \cdot ((N_1)\depth {k+1}; \dots ; (N_n)\depth {k+1}) : \tau
}

\inferrule{
    c : \tau_1 \to \dots \to \tau_n \to \tau \in \Delta \\
    \forall _{i, 1 \le i \le n}. \Delta \vdash (N_i)\depth {k} : \tau_i
}{
    \Delta \vdash c \cdot ((N_1)\depth {k}; \dots ; (N_n)\depth {k}) : \tau
}
\end{mathpar}


We extend hereditary substitution
to infinite terms in the sense that given two infinite terms in their canonical form (i.e. $\beta$-normal-$\eta$-long form), 
there is a systematic procedure of generating an infinite term that is the result of 
substituting one term (for free variables) into another. In particular, if 
the two input terms to the hereditary substitution procedure are of an observation depth $k$, 
then the resulting term can be calculated up to the observation depth $k$.
The following judgments define hereditary 
substitution on productive B\"ohm trees. The type $\tau$ in the judgment provides typing information for the term being 
substituted by ($N\dk$ or $T\dk$). 

\begin{center}
\begin{longtable}{ll}
$[N\pdepth{k}/x]^\tau M\pdepth k  =\pdepth k M' \pdepth k$ & Hereditary substitution in canonical terms  \\
$[N\pdepth k/x]^\tau R\pdepth k  =\pdepth k R' \pdepth k$ & Hereditary substitution in neutral terms \\
$[N\pdepth k/x] ^\tau  T \pdepth k =\pdepth k T' \pdepth k$ & Hereditary substitution in continuing spines  \\
$T\pdepth k \rhd ^\tau N \pdepth k  = \pdepth k R' \pdepth k$ & Continuing spine applications \\
$[N\pdepth k/x] ^\tau  S \pdepth {k+1} =\dkpo S' \dkpo$ & Hereditary substitution in suspended spines  \\
\end{longtable}
\end{center}

One feature of hereditary substitution worth noting is that even if the type $\tau$ does not type the term 
being substituted for, the procedure still terminates but produces an undefined value, because no clauses will apply in the definition. In other words,
the procedure of substitution is robust with respect to typing of the input terms.
The typing information is to ensure that the procedure of hereditary substitution is well-defined as an inductive definition.

The judgments for hereditary substitution are defined by lexicographic induction on $\tau$, $k$ and the structure of the term 
on the right-hand side of $\tau$ ($M\dk$, $R\dk$, $T\dk$, $N\dk$, and $S\dkpo$ respectively) as follows.

    \begin{longtable}[l]{l}
\boxed{[N\pdepth{k}/x]^\tau M\pdepth k  =\pdepth k M' \pdepth k}\\
    $[N\pdepth{0}/x]^\tau M\pdepth{0} =\pdepth{0} M'\depth{0}$ \\
    $[N \pdepth{k+1}/x]^\tau R \pdepth{k+1} = \pdepth{k+1} [N \pdepth{k+1}/x]^\tau R \pdepth{k+1}$ \\
    $[N \pdepth{k+1}/x]^\tau \lambda y.\, M \pdepth{k+1} = \lambda y.  \,M'\dkpo$ \\ \hspace{12em} if $ [N \pdepth{k+1}/x]^\tau M \pdepth{k+1} = \pdepth{k+1} M'\pdepth{k+1}$ \\
\boxed{[N\pdepth k/x]^\tau R\pdepth k  =\pdepth k R' \pdepth k}\\
    $[N\pdepth{0}/x]^\tau R\pdepth{0} =\pdepth{0} R'\depth{0}$ \\
    $[N \pdepth{k+1}/x]^\tau (x \cdot T\pdepth{k+1} )=\pdepth{k+1} N' \pdepth{k+1} $ 
 \\ \hspace{12em} \makecell[lt]{if $[N\pdepth{k+1}/x]^\tau T \pdepth{k+1}=\pdepth{k+1} T'\pdepth{k+1}$
    \\ and $T' \pdepth{k+1} \rhd^\tau N \pdepth {k+1} =\pdepth{k+1} N'\pdepth{k+1}$}\\
    $[N \pdepth{k+1}/x]^\tau (y \cdot T\pdepth {k+1}) =\pdepth{k+1} y \cdot T' \pdepth{k+1}$ 
    \\ \hspace{12em} if $[N\pdepth{k+1}/x]^\tau T \pdepth{k+1}=\pdepth{k+1} T'\pdepth{k+1}$ \\
    $[N \pdepth{k+1}/x]^\tau (c \cdot S\pdepth {k+1}) =\pdepth{k+1} c \cdot S'  \pdepth{k+1}$ 
    \\ \hspace{12em} if $[N\pdepth{k}/x]^\tau S \dkpo=\pdepth{k+1} S'\dkpo$ \\
\boxed{[N\pdepth k/x] ^\tau  T \pdepth k =\pdepth k T' \pdepth k}\\ 
    $[N\pdepth{0}/x]^\tau T\pdepth{0} =\pdepth{0} T'\depth{0}$ \\
    $[N \pdepth{k+1}/x]^\tau \snil  =\pdepth{k+1} \snil$ \\
    $[N \pdepth{k+1}/x]^\tau (M\pdepth{k+1};T\pdepth{k+1}) =\pdepth{k+1} M'\pdepth{k+1};T'\pdepth{k+1} $ 
    \\ \hspace{12em}
    \makecell[lt]{if $[N\pdepth{k+1}/x]^\tau M\dkpo=\pdepth{k+1}M'\pdepth{k+1}$ 
    \\ and $[N\pdepth{k+1}/x]^\tau T\pdepth{k+1} =\pdepth{k+1}T'\pdepth{k+1}$} \\
\boxed{T\pdepth k \rhd ^\tau N \pdepth k  = \pdepth k R' \pdepth k}\\
    $T\pdepth{0} \rhd ^\tau N \pdepth 0 =\pdepth{0} R'\depth{0}$ \\
    $\snil  \rhd ^* R\pdepth{k+1} =\pdepth{k+1} R\pdepth{k+1}$ \\
    $(N\dkpo ; T\dkpo) \rhd^{\tau_2 \to \tau_1} \lambda x. M\dkpo =\pdepth{k+1} M'' \pdepth{k+1}$  
    \\ \hspace{12em} \makecell[lt]{if $[N\pdepth{k+1}/x]^{\tau_2} M\pdepth{k+1}  = \pdepth{k+1} M'\pdepth{k+1}$ 
    \\ and $T\pdepth{k+1} \rhd^{\tau_1}M'\pdepth{k+1}=\dkpo M''\pdepth{k+1}$}\\
\boxed{[N\pdepth k/x] ^\tau  S \dkpo =\dkpo S' \dkpo}\\ 
    $[N \pdepth{k}/x]^\tau \snil  =\pdepth{k+1} \snil$ \\
    $[N \pdepth{k}/x]^\tau (M\dk;S\dkpo) =\pdepth{k+1} M'\dk;S'\pdepth{k+1} $ 
    \\ \hspace{12em}
    \makecell[lt]{if $[N\pdepth{k}/x]^\tau M\dk=\dk M'\dk$ 
    \\ and $[N\pdepth{k}/x]^\tau S\pdepth{k+1} =\pdepth{k+1}S'\pdepth{k+1}$} \\
    \end{longtable}

     Note that in the substitution clauses for $[N\depth{k+1}/x](c \cdot S\dkpo)$, the premise assumes that the term $N\depth {k+1}$ is of depth $k$. 
    This is justified because, as we mentioned previously in Section~\ref{sec:equality_infinitary_syntax_tree}, any term may be viewed at a lesser observation depth.

\begin{theorem} [Hereditary Substitution Respects Observation Depth]
    If both $M\depth k$ and $N\depth k$ are terms of observation depth $k$, then for all $\tau$, $[N\depth k/x]^\tau M\depth k$  is of productive depth $k$ if defined. 
\end{theorem}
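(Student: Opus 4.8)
The plan is to prove all five substitution judgments simultaneously, by the very same lexicographic induction on the triple consisting of $\tau$, $k$, and the structure of the right-hand-side term that was used to define them. Since each clause's premises sit at a strictly smaller position in this measure (which is exactly what makes the definition well-formed), the induction hypothesis will be available for every premise, and the only work in each case is to check that the top-level shape of the output term is a legal grammar production at the claimed depth, given that the recursively computed sub-results already have the depths the hypothesis supplies. The depth-$0$ base cases require nothing: in each judgment the input has depth $0$, so by the universal unobservable form at depth $0$ the output is trivially of depth $0$.

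For depth $k+1$ I would go clause by clause. The canonical-term clauses are immediate: the neutral-reduction clause defers to the neutral judgment at the same $\tau$ and $k+1$ with smaller structure, and the $\lambda$-clause wraps the recursively computed body in an abstraction, matching the production $M\dkpo ::= \lambda y.\, M\dkpo$. The continuing-spine clauses are equally direct, since substituting into the head and tail of $M\dkpo ; T\dkpo$ returns components of depth $k+1$ that reassemble into a depth-$(k+1)$ spine. The two clauses carrying the real content are the neutral clause for a variable head $x \cdot T\dkpo$ and the suspended-spine clause.

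In the variable-head clause I would first invoke the hypothesis on the continuing-spine substitution to obtain $T'\dkpo$ of depth $k+1$, and then on the spine application $T'\dkpo \rhd^\tau N$. The latter recurses through the spine-application judgment, where each argument consumed strictly decreases the type: the clause $(N;T)\rhd^{\tau_2 \to \tau_1}\lambda x. M$ substitutes at $\tau_2$ and re-applies at $\tau_1$, both proper subformulae of $\tau_2 \to \tau_1$. This is precisely why $\tau$ is the leading component of the measure, and it lets me conclude the resulting neutral term has depth $k+1$. In the suspended-spine clause, the head constant $c$ halts the current observation, so the grammar production $S\dkpo ::= M\dk ; S\dkpo$ pairs a depth-$k$ argument with a depth-$(k+1)$ tail. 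Accordingly I would apply the canonical-term hypothesis at depth $k$ to each argument $M\dk$ and the suspended-spine hypothesis at depth $k+1$ to the tail, reassembling a legal depth-$(k+1)$ suspended spine so that $c \cdot S'\dkpo$ matches $R\dkpo ::= c \cdot S\dkpo$.

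I expect the main obstacle to be the bookkeeping around these cross-judgment calls: in particular, verifying that the spine-application recursion triggered by the variable-head clause genuinely decreases the measure on $\tau$ rather than merely on structure, and that the depth drop to $k$ inside the suspended-spine clause is matched correctly against its grammar production rather than silently losing or gaining a depth. Once those two points are pinned down, every remaining case is a mechanical match of an output constructor against its production, and the theorem follows.
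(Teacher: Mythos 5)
Your proposal is correct and takes essentially the same route as the paper, whose entire proof is ``straightforward lexicographic induction on $\tau$, $k$, and the structure of $M\depth{k}$'': you carry out precisely that induction, simultaneously over all five substitution judgments, exactly as the definition's own well-foundedness argument dictates. The two points you flag as delicate --- that the spine-application judgment invoked from the variable-head clause only recurses at strictly smaller $\tau$, and that the suspended-spine clause pairs depth-$k$ arguments with a depth-$(k{+}1)$ tail --- are the right ones, and resolving them as you describe is exactly the content the paper leaves implicit.
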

\begin{proof}
    Straightforward lexicographic induction on $\tau$, $k$, and the structure of $M\depth{k}$. 
\end{proof}




\begin{corollary}
    If $M$ (i.e., $M\depth \omega$) and $N$ (i.e. $N \depth \omega$) are infinitary productive terms, then so is $[N/x]^\tau M$ (i.e. $[N\depth \omega/x]^\tau M\depth{\omega}$). 
\end{corollary}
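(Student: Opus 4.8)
The plan is to derive the $\omega$-level statement from the finite-depth theorem (Hereditary Substitution Respects Observation Depth) using the general principle that an infinitary term is completely determined by its finite truncations, together with the identification of \emph{productive} with \emph{canonical} supplied by the Productivity theorem. Since $M\depth\omega$ and $N\depth\omega$ are productive, they are canonical terms, so each of their truncations $M\depth k$ and $N\depth k$ is a canonical term of depth $k$. I would then show that the infinitary substitution $[N\depth\omega/x]^\tau M\depth\omega$ is the unique infinitary term whose depth-$k$ view is $[N\depth k/x]^\tau M\depth k$ for every $k$; uniqueness is exactly the statement that $=\depth\omega$ is the conjunction of all $=\depth k$ from the discussion of equality. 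The goal is to conclude that this glued term is again canonical, hence productive.

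First I would record the coherence fact that hereditary substitution commutes with viewing a term at a lesser observation depth: truncating $[N\depth{k+1}/x]^\tau M\depth{k+1}$ to depth $k$ yields exactly $[N\depth k/x]^\tau M\depth k$. This is proved by the same lexicographic induction on $\tau$, $k$, and the structure of $M\depth k$ that defines the substitution, inspecting each clause and using that every recursive premise is itself a substitution at depth $k$ or $k+1$ whose truncation is controlled by the induction hypothesis. The clause for $c \cdot S\dkpo$, where the premise drops the suspended spine to depth $k$, is handled precisely by the remark that a depth-$(k+1)$ term may be viewed at depth $k$. Coherence guarantees that the family $\{\,[N\depth k/x]^\tau M\depth k\,\}_{k}$ glues into a single infinitary term, which is by definition the value $[N\depth\omega/x]^\tau M\depth\omega$.

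Next I would invoke the depth-preservation theorem: for every finite $k$ the term $[N\depth k/x]^\tau M\depth k$ is of productive depth $k$, i.e.\ a canonical term observed to depth $k$. Since a canonical term of depth $\omega$ is precisely an infinitary term each of whose finite observations matches the canonical grammar, and each truncation of the glued term matches the grammar up to its own depth, the limit $[N\depth\omega/x]^\tau M\depth\omega$ is itself a canonical term. Part (1) of the Productivity theorem then gives that it is productive, which is the claim.

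The main obstacle is the coherence lemma of the second step; everything else is bookkeeping once substitution is known to respect truncation. The delicate point there is the suspended-spine clause for a constant head, where the subterms are observed one depth later than the surrounding term. I expect the argument to go through precisely because the grammar of productive B\"ohm trees only defers observation at constants, so the single extra observation afforded by moving from depth $k$ to $k+1$ is exactly enough to expose one more layer of constants along every trace, matching the productivity condition in the limit.
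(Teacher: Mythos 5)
Your proposal is correct and follows essentially the same route as the paper, whose entire proof is the observation that the result of the substitution has productive depth $k$ for every $k$ by the preceding depth-preservation theorem, hence is productive at depth $\omega$. The coherence lemma you isolate --- that hereditary substitution commutes with viewing a term at a lesser observation depth, so the finite-depth results glue into a single infinitary term --- is left implicit in the paper, and making it explicit is a worthwhile sharpening rather than a departure.
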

\begin{proof}
    The result of hereditary substitution can have arbitrary productive depth by the previous proof. 
\end{proof}

\begin{theorem}
    [Commutation of Hereditary Substitution]
    For all $k$, 

     if $[(N_1)\depth k/x]^{\tau_1}[(N_2)\depth k/z]^{\tau_2}M\depth k = \depth k M'\depth k$, 

    then $[[(N_1)\depth k/x]^{\tau_1}(N_2)\depth k/z]^{\tau_2}[(N_1)\depth k/x]^{\tau_1}M \depth k = \depth k M' \depth k$.
\end{theorem}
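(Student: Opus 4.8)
The plan is to prove the statement by induction on the observation depth $k$, simultaneously for all five judgment forms of hereditary substitution (canonical terms, neutral terms, continuing spines, spine applications $\rhd$, and suspended spines), relying throughout on the standard variable conventions that accompany hereditary substitution, namely $x \neq z$ and $z$ not occurring in $N_1$, so that the rewritten substitutend $[(N_1)\depth k/x]^{\tau_1}(N_2)\depth k$ on the right-hand side is the correct one (concretely, the convention $z \notin N_1$ is what makes the case $M = x$ go through). The base case $k = 0$ is immediate, since any two terms satisfy $M\depth 0 =\depth 0 M'\depth 0$; both composites are therefore trivially equal at depth $0$, and all the content lives in the step from $k$ to $k+1$.

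For the inductive step at depth $k+1$ I would run a secondary induction following the recursive structure of the definition of hereditary substitution: a lexicographic induction on the pair of types $(\tau_1,\tau_2)$ together with the (finite, by productivity) skeleton of $M\depth{k+1}$ lying above the first layer of constants. The routine cases are the non-triggering ones. For $M = \lambda y.\,M_0$ I push all three substitutions under the binder, using freshness of $y$ with respect to $x,z,N_1,N_2$, and invoke the induction hypothesis on the structurally smaller body $M_0$ at the same depth $k+1$. For a neutral term $y \cdot T$ whose head $y$ is neither $x$ nor $z$, the head is untouched by every substitution and I recurse into the continuing spine $T\depth{k+1}$, which stays at depth $k+1$ and is structurally smaller. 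The constant case $M = c \cdot S$ is where the depth induction earns its keep: a constant halts the current observation, so the suspended spine $S$ carries arguments at depth $k$, and I appeal to the outer induction hypothesis at depth $k$ (available for all types and all terms) to conclude, invoking the theorem that hereditary substitution respects observation depth to keep all intermediate depths aligned at $k$.

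The hard cases are the two triggering ones, where the head of $M$ is exactly the variable being substituted. When $M = x \cdot T$, the outer substitution $[N_1/x]^{\tau_1}$ fires a spine application $T' \rhd^{\tau_1} N_1$; when $M = z \cdot T$, the inner substitution $[N_2/z]^{\tau_2}$ fires $T' \rhd^{\tau_2} N_2$, and the outcome must be reconciled with the right-hand side, where the firing is instead driven by $[[N_1/x]^{\tau_1}N_2/z]^{\tau_2}$ after $[N_1/x]^{\tau_1}$ has already rewritten both $T$ and, separately, $N_2$. To handle these I would prove, as part of the same simultaneous induction, an auxiliary commutation principle relating hereditary substitution to the spine-application operation $\rhd$: substituting into the output of $T \rhd^\sigma R$ agrees with performing $\rhd$ after substituting into $T$ and into $R$, with the governing type $\sigma$ supplying the needed decrease. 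The delicate bookkeeping is that the $z$-head case produces copies of $N_2$ that still contain $x$, so the left-hand computation performs $[N_1/x]^{\tau_1}$ on those copies, and this must be shown to coincide with the $[N_1/x]^{\tau_1}N_2$ precomputed on the right; this is exactly where $z \notin N_1$ and $x \neq z$ are used.

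The main obstacle, as in all hereditary-substitution metatheory, is termination of the induction: the spine applications in the triggering cases can return a term strictly larger than $M$, so no measure based on term structure alone suffices. The fix is a lexicographic measure with the observation depth $k$ outermost, the types $(\tau_1,\tau_2)$ next, and term structure innermost; I expect the subtle point to be verifying that every recursive appeal in the $x$-head and $z$-head clauses, and in the auxiliary $\rhd$-commutation lemma they depend on, strictly decreases this measure, since a reduction at type $\tau_1$ can feed into a substitution that re-examines material at type $\tau_2$, and conversely. Once the measure is pinned down and shown to decrease on every clause of the definition, the remaining work is the mechanical matching of the two unfolded composites case by case, with all depths kept equal by the observation-depth theorem.
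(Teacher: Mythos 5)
Your proposal is correct and takes essentially the same approach as the paper, whose entire proof is the one line ``by lexicographic induction on $k$, $\tau_2$, and the structure of $M$''; your elaboration of the base case at depth $0$, the triggering $x$- and $z$-headed cases, the auxiliary commutation principle for $\rhd$, and the role of the conventions $x \neq z$ and $z \notin N_1$ fills in exactly the details that one-liner leaves implicit. The only divergence is that you put both $\tau_1$ and $\tau_2$ into the lexicographic measure where the paper lists only $\tau_2$ --- a harmless, and arguably more careful, refinement given that the $x$-headed case fires a spine application governed by $\tau_1$.
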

\begin{proof}
    By lexicographic induction on $k$, $\tau_2$, and the structure of $M$. 
\end{proof}


\section{The Type Theory of CoLF$^\omega$}
\label{sec:the_type_theory_of_colf_omega}
We present the type theory of a logical framework whose term model is typed productive B\"ohm trees.

\subsection{Syntax}
Besides the canonical terms, a logical framework also has the syntactic classes of canonical and atomic types, 
kinds, signatures, and contexts. We use ``expressions'' to refer to these entities (terms, types, kinds) in general. 
Expressions may contain potentially infinite terms with observation depths, and
so they are also parametrized by observation depths. 
Although the terms may be infinitary, the structure of kinds and types are finitary.
Thus, 
two expressions are equal up to depth $k$ if 
they are structurally equal and the underlying terms are equal up to depth $k$.
We write $A\depth k \to B\dk$ for $\Pi x : A\dk.\, B\dk$ if $x$ does not occur in $B\depth k$.
Similarly, we may write $A\dk \to K\dk$ for $\Pi x : A\dk. K\dk$. 
The syntax for signatures, contexts, kinds, and types are as follows. Notice that the depth remains $k+1$ on the 
right-hand side of all grammar rules for types and kinds, i.e. the type and kind structures are essentially inductive.
\begin{center}
    \begin{tabular}{lrcl}
        Signature & $\Sigma\depth{k+1}$ & $::= $ & $ \cdot \mid \Sigma, a : K\depth{k+1} \mid \Sigma, c : A\depth{k+1}$ \\
        Context & $\Gamma\depth{k+1}$ & $::= $ & $\cdot \mid \Gamma\depth{k+1}, x : A\depth{k+1}$ \\
        Kind & $K \depth{k+1}$ & $ ::= $ & $\type \mid \cotype \mid \Pi x : A\depth{k+1}.\ K\depth{k+1}$ \\
        Canonical types & $A\depth{k+1}, B\depth{k+1}$ & $ ::= $ & $P\depth{k+1} \mid \Pi x : A\depth{k+1}.\ B\depth{k+1}$ \\
        Atomic types & $P\depth{k+1}$ & $ ::= $ & $a \cdot S\depth{k+1}$ \\
        Canonical terms  &  $M\pdepth{k+1} $ & $ ::= $ & $\lambda x.\, M\pdepth{k+1} \mid R\pdepth{k+1}$ \\
        Neutral terms   &  $R\pdepth{k+1} $ & $ ::= $ & $x \cdot T\pdepth{k+1} \mid  c\cdot S \pdepth{k+1}$\\
        Continuing Spines   & $T\pdepth{k+1}$ & $ ::= $ & $\snil \mid M\pdepth {k+1} ; T\pdepth{k+1}$ \\
        Suspended Spines  & $S\pdepth{k+1}$ & $ ::= $ & $\snil \mid M\pdepth {k} ; S\pdepth{k+1}$ \\
    \end{tabular}
\end{center}

There is a  correspondence between the structure of canonical dependent types 
and the simple types. We use the erasure operation 
$(-)^o$ (super script $o$) to map an observable canonical type to a simple type $\tau$. For example, 
$(\Pi x:a.\,  a_2\cdot (x))^o =* \to *$. 
We also define the hereditary substitution of a canonical term into kinds, canonical and atomic types, 
and contexts. They are defined by the following judgments.

\begin{center}
    \begin{longtable}{llllll}
    $A\depth{k+1}^o = \tau$ & Observable type $A$ erases to $\tau$ \\
    $[N\depth k /x]^\tau K\dkpo =\dkpo K' \dkpo $ & Hereditary substitution in types \\
    $[N\depth k /x]^\tau A\dkpo  =\dkpo A' \dkpo $ & Hereditary substitution in canonical type \\
    $[N\depth k /x]^\tau P\dkpo  =\dkpo P' \dkpo $ & Hereditary substitution in atomic type \\
    $[N\depth k/x]^\tau \Gamma\dkpo  =\dkpo  \Gamma'\dkpo $ & Hereditary substitution in contexts \\
    \end{longtable}
\end{center}

The operation of substitution is also defined by lexicographic induction on $\tau$, $k$ 
and the structure of the expression on the right-hand side of $\tau$ ($K, A, P, \Gamma$). 
    \begin{longtable}[l]{llllll}
    \boxed{A\depth{k+1}^o = \tau}\\
    $((\pitype{x}{A_2}{A_1})\pdepth{k+1})^o = ((A_2)\pdepth{k+1}^o) \to ((A_1)\pdepth{k+1}^o)$ \\
    $(P\pdepth{k+1})^o = *$ \\
    \boxed{[N\depth k /x]^\tau K\dkpo  =\dkpo K' \dkpo } \\
    $[N\depth {k}/x]^\tau\type  =\depth{k+1} \type $ \\
    $[N\depth {k}/x]^\tau\cotype =\depth{k+1}\cotype$ \\
    $[N\depth {k}/x]^\tau\Pi y : A\dkpo.\ K\depth{k+1} = \Pi y: A'\dkpo. K'\dkpo$ 
    \\ \hspace{12em} \makecell[lt]{ if $ [N\depth{k}/x]^\tau A\depth{k+1} = \depth{k+1} A'\depth{k+1}$  
    \\ and $[N\depth{k}/x]^\tau K\depth{k+1} =\depth{k+1} K' \depth{k+1}$ }\\
    \boxed{[N\depth k /x]^\tau A\dkpo  =\dkpo A' \dkpo }\\
    $[N\dk /x]^\tau P\depth{k+1} = P'\dkpo $ 
    if $[N\dk /x]^\tau P\dkpo = \dkpo P' \dkpo$ \\
    $[N\dk /x]^\tau\Pi y : B\dkpo.\, A\dkpo =\depth{k+1}\Pi y : B'\dkpo.\, A'\depth{k+1}$ \\ 
    \hspace{12em} \makecell[lt]{
        if $[N\depth{k}/x]^\tau B\depth{k+1} = \depth{k+1} B'\depth{k+1}$ 
    \\ and $[N\depth{k}/x]^\tau A\depth{k+1} = \depth{k+1} A'\depth{k+1}$}\\
    \boxed{[N\depth k /x]^\tau P\dkpo  =\dkpo P' \dkpo }\\
    $[N\dk /x]^\tau(a \cdot S\depth{k+1}) =\depth{k+1} a \cdot S' \depth{k+1} $ 
    \\ \hspace{12em} if $[N\depth{k}/x]^\tau S\depth{k+1} = \depth{k+1} S'\depth{k+1}$ \\
    \boxed{[N\depth k/x]^\tau \Gamma\dkpo  =\dkpo  \Gamma'\dkpo } \\
    $[N\dk /x]^\tau \cdot =\depth{k+1} \cdot $ \\
    $[N\dk /x]^\tau(\Gamma\dkpo, y : A\depth{k+1}) =\depth{k+1} \Gamma'\dkpo, y : A\depth{k+1} $  
    \\ \hspace{12em} \makecell[lt]{if $ [N\depth{k}/x]^\tau\Gamma\depth{k+1} = \depth{k+1} \Gamma'\depth{k+1}$ 
    \\ and $[N/x]^\tau\depth{k} A\depth{k+1} = \depth{k+1} A'\depth{k+1}$ }\\
    \end{longtable}

\begin{theorem} [Hereditary Substitution Respects Observation Depth]
    If both $N\depth k$ is a term of observation depth $k$, 
    and $E\dkpo$ ($E$ is $K, A, P$) is a kind/type of depth $k + 1$,
    then for all $\tau$, $[N\dk/x]^\tau E\dkpo$  is of productive depth $k+1$ if defined. 
\end{theorem}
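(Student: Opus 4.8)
The plan is to proceed by lexicographic induction on $\tau$, $k$, and the structure of $E\dkpo$, exactly mirroring the induction principle under which the substitution operations on kinds, canonical types, and atomic types were defined. Since the statement ranges over three syntactic classes ($K$, $A$, $P$) whose substitution clauses refer to one another, I would prove the three claims simultaneously in a single mutual induction. At each step I would inspect the last clause of the relevant substitution judgment, assume the result is defined (so that the premises of that clause hold and the recursively produced subexpressions exist), and verify that reassembling them yields an expression whose observation depth is again $k+1$.

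First I would dispatch the base kind cases $\type$ and $\cotype$: these clauses return $\type$ and $\cotype$ verbatim, which are kinds of depth $k+1$ by definition, so nothing is required. Next I would treat the $\Pi$-cases, namely $\Pi y : A\dkpo.\,K\dkpo$ for kinds and $\Pi y : B\dkpo.\,A\dkpo$ for canonical types. Here the defining clause recurses on the two immediate subexpressions with the same $\tau$ and $k$ but strictly smaller structure, so the induction hypothesis applies to each and delivers subexpressions $A'\dkpo, K'\dkpo$ (respectively $B'\dkpo, A'\dkpo$) of depth $k+1$; reassembling the $\Pi$ keeps the depth at $k+1$ because the type and kind formers are inductive and leave the depth unchanged. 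The canonical-type clause that delegates to atomic types, $[N\dk/x]^\tau P\dkpo = P'\dkpo$, is then immediate from the atomic-type case by the induction hypothesis.

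The only genuinely interesting case is the atomic type $P = a \cdot S\dkpo$, where the clause produces $a \cdot S'\dkpo$ provided the suspended-spine substitution $[N\dk/x]^\tau S\dkpo = S'\dkpo$ is defined. This is the step where the depth annotation actually interacts with the term layer: a suspended spine $S\dkpo$ carries term components $M\dk$ at depth $k$, and its substitution is precisely one of the five mutually recursive term-level judgments that were covered by the earlier theorem (Hereditary Substitution Respects Observation Depth, for terms). Invoking that theorem gives that $S'\dkpo$ is again a suspended spine of depth $k+1$; hence $a \cdot S'\dkpo$ is an atomic type of depth $k+1$, closing the case.

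The main obstacle, such as it is, is making precise the interface between the type/kind layer and the term layer at the atomic-type case. One must confirm that the suspended-spine substitution invoked there is in fact governed by the earlier term theorem—whose headline is phrased for canonical terms $M\dk$, but whose proof is a mutual induction over all five term-level judgments and therefore includes the suspended-spine judgment—and that the depth bookkeeping matches, i.e. that the terms $M\dk$ inside $S$ genuinely sit at depth $k$ while the spine as a whole and the reconstructed atomic type sit at depth $k+1$. Everything else is routine, since the kind and type structures are inductive and thread the depth $k+1$ through unchanged; in particular neither $\tau$ nor $k$ ever decreases within the type/kind cases, and only the structural component of the lexicographic measure does any work until the descent into terms.
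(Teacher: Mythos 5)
Your proposal is correct and follows exactly the paper's own argument, which is stated there as a one-line ``straightforward lexicographic induction on $\tau$, $k$, and the structure of $E\dkpo$''; you have simply fleshed out the case analysis, including the correct observation that the atomic-type case hands off to the term-level suspended-spine judgment covered by the preceding theorem. No gaps.
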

\begin{proof}
    Straightforward lexicographic induction on $\tau$, $k$, and the structure of $E\dkpo$. 
\end{proof}


\subsection{Type Checking Rules}

We simultaneously define the following type checking judgments, by induction on $k$ and the structure of the 
subject expression.
All judgments except $\Sigma\dk\validsig$ presuppose $\Sigma\dk \validsig$.
All judgments with $\Gamma\dk$ present presuppose $\vdash_{\Sigma\dk} \Gamma\dk \validctx$.

\begin{center}
    \begin{longtable}{ll}
         $\Sigma\depth k\validsig$ & Signature $\Sigma\dk$ is valid\\
         $\vdash_{\Sigma\depth k}\Gamma\depth k\validctx $ & Context $\Gamma\dk$ is well-formed\\
          $\Gamma\depth k\vdash_{\Sigma\depth k} K\dk \Leftarrow \kind$ & Kind $K\dk$ is a valid kind\\
          $\Gamma\depth k\vdash_{\Sigma\depth k} A\dk \Leftarrow \typecotype$ & Type $A\dk$ is a canonical (co)type\\
        $\Gamma\depth k\vdash_{\Sigma\depth k} P\dk \Rightarrow K\dk$ & Atomic type $P\dk$ synthesizes kind $K\dk$ \\
          $\Gamma\depth k\vdash_{\Sigma\depth k} S\dk \rhd K\dk \Rightarrow K'\dk$ & Suspended Spine $S\dk$ applied to kind $K\dk$ produces kind $K'\dk$\\
          $\Gamma\depth k\vdash_{\Sigma\depth k} M\dk \Leftarrow A\dk$ & Term $M\dk$ checks against type $A\dk$\\
          $\Gamma\depth k\vdash_{\Sigma\depth k} R\dk \Rightarrow P\dk$ & Neutral term $R\dk$ synthesizes type $P\dk$ \\
          $\Gamma\depth k\vdash_{\Sigma\depth k} T\dk \rhd A\dk \Rightarrow  P\dk$ & Continuing Spine $T\dk$ applied to type $A\dk$ produces type $P\dk$\\
          $\Gamma\depth k\vdash_{\Sigma\depth k} S\dk \rhd A\dk \Rightarrow  P\dk$ & Suspended Spine $S\dk$ applied to type $A\dk$ produces type $P\dk$\\
    \end{longtable}
\end{center}

$\boxed{\vdash\Sigma\dk \validsig}$ 
\begin{mathpar}
        \inferrule{
        }{
         \vdash    \Sigma\depth0 \validsig
        }

        \inferrule{
        }{
         \vdash    \cdot \validsig
        }
    
        \inferrule{
           \vdash \Sigma\dkpo \validsig
            \\ 
            \cdot\vdash K\dkpo \Leftarrow  \kind
        }{
           \vdash \Sigma\dkpo, a : K\dkpo \validsig
        }
    
        \inferrule{
            \vdash\Sigma\dkpo \validsig
            \\ 
            \cdot \vdash A\dkpo \Leftarrow  \typecotype{}
        }{
            \vdash\Sigma\dkpo, c : A\dkpo \validsig
        }
    \end{mathpar}

    $\boxed{\vdash_{\Sigma\dk}\Gamma\dk \validctx }$
\begin{mathpar}
        \inferrule{
        }{
            \vdash_{\Sigma\depth{0}}  \Gamma\depth{0} \validctx
        }

        \inferrule{
        }{
            \vdash_{\Sigma\dkpo} \cdot \validctx 
        }
    
        \inferrule{
            \vdash_{\Sigma\dkpo} \Gamma\dkpo \validctx 
            \\ 
            \Gamma\dkpo \vdash_{\Sigma\dkpo} A\dkpo \Leftarrow \typecotype
        }{
            \vdash \Gamma\dkpo, x : A\dkpo \validctx 
        }
    \end{mathpar}

    $\boxed{\Gamma\dk \vdash_{\Sigma\dk} K\dk \Leftarrow \kind}$
    \begin{mathpar}
        \inferrule{
        }{
            \Gamma\depth 0\vdash_{\Sigma\depth0}  K\depth {0} \Leftarrow  \kind
        }

        \inferrule{
        }{
            \Gamma\dkpo \vdash_{\Sigma\dkpo} \type \Leftarrow \kind
        }
    
        \inferrule{
        }{
            \Gamma\dkpo \vdash_{\Sigma\dkpo} \cotype \Leftarrow \kind
        }
    
        \inferrule{
            \Gamma\dkpo \vdash_{\Sigma\dkpo} A\dkpo \Leftarrow \typecotype
            \\
            \Gamma\dkpo, x : A\dkpo\vdash_{\Sigma\dkpo} K\dkpo \Leftarrow \kind
        }{
            \Gamma\dkpo \vdash_{\Sigma\dkpo}  \Pi x : A\dkpo . \, K\dkpo \Leftarrow \kind
        }
    \end{mathpar}
    
    $\boxed{\Gamma\dk \vdash_{\Sigma\dk} A\dk \Leftarrow \typecotype}$ 
    \begin{mathpar}
        \inferrule{
        }{
            \vdash_{\Sigma\depth{0}}  A\depth{0} \Leftarrow  \typecotype
        }

        \inferrule{
            \Gamma\dkpo \vdash_{\Sigma\dkpo} B\dkpo \Leftarrow \typecotype\\
            \Gamma\dkpo, x : B\dkpo\vdash_{\Sigma\dkpo} A\dkpo \Leftarrow \typecotype
        }{
            \Gamma\dkpo \vdash_{\Sigma\dkpo} \Pi x : B\dkpo. \, A\dkpo \Leftarrow \typecotype
        }
    
        \inferrule{
            \Gamma\dkpo \vdash_{\Sigma\dkpo} P\dkpo \Rightarrow K\dkpo \\
            K\dkpo = \type/\cotype
        }{
            \Gamma\dkpo \vdash_{\Sigma\dkpo} P\dkpo \Leftarrow \typecotype
        } 
    \end{mathpar}

    $\boxed{\Gamma\dk\vdash_{\Sigma\dk} P\dk \Rightarrow K\dk}$
    \begin{mathpar}
        \inferrule{
        }{
            \Gamma\depth0\vdash_{\Sigma\depth0} P\depth0 \Leftarrow K\depth {0} 
        }
    
        \inferrule{
            a : K\dkpo \in \Sigma\dkpo
            \\
            \Gamma\dkpo \vdash _{\Sigma\dkpo} S\dkpo \rhd K\dkpo \Rightarrow  K'\dkpo
        }{
            \Gamma\dkpo\vdash_{\Sigma\dkpo} a\cdot S\dkpo \Rightarrow  K'\dkpo
        }
    \end{mathpar}

    $\boxed{\Gamma\dk \vdash_{\Sigma\dk} S\dk \rhd K\dk \Rightarrow K'\dk} $
    \begin{mathpar}
        \inferrule{
        }{
            \Gamma\depth0\vdash_{\Sigma\depth0}  S\depth{0} \rhd K\depth{0} \Rightarrow K'\depth {0} 
        }

    \inferrule{
    }{
        \Gamma\dkpo  \vdash_{\Sigma\dkpo} \snil  \rhd K\dkpo \Rightarrow  K\dkpo
    }

    \inferrule{
        \Gamma\dkpo \vdash_{\Sigma\dkpo} M\dkpo \Leftarrow  A\dkpo \\
        [M\dk/x]^{\erased{(A\dkpo)}} K\dkpo =  \depth{k+1}K'\dkpo
        \\
        \Gamma\dkpo \vdash_{\Sigma\dkpo} S\dkpo \rhd K'\dkpo \Rightarrow  K''\dkpo
    }{
        \Gamma\dkpo  \vdash_{\Sigma\dkpo} M\dk ;S\dkpo  \rhd \Pi x : A\dkpo. \, K\dkpo \Rightarrow   K''\dkpo
    }
    \end{mathpar}

    \boxed{\Gamma\dk \vdash_{\Sigma\dk} M\dk \Leftarrow A\dk} 
    \begin{mathpar}
        \inferrule{
        }{
            \Gamma \depth0\vdash_{\Sigma\depth 0}  M\depth0 \Leftarrow A \depth {0} 
        }

    \inferrule{
        \Gamma\dkpo  \vdash_{\Sigma\dkpo} R\dkpo \Rightarrow P'\dkpo
        \\
        P'\dkpo =\depth{k+1} P\dkpo
    }{
        \Gamma\dkpo  \vdash_{\Sigma\dkpo} R\dkpo \Leftarrow P\dkpo
    }

    \inferrule{
        \Gamma\dkpo, x: B\dkpo \vdash_{\Sigma\dkpo} M\dkpo \Leftarrow A\dkpo
    }{
        \Gamma\dkpo  \vdash_{\Sigma\dkpo} \lambda x. M\dkpo \Leftarrow\Pi x:B\dkpo.\,A\dkpo
    }
    \end{mathpar}

    \boxed{\Gamma\dk \vdash_{\Sigma\dk} R\dk \Rightarrow P\dk} 
    \begin{mathpar}
        \inferrule{
        }{
            \Gamma\depth0\vdash_{\Sigma\depth0}  R\depth0 \Rightarrow P\depth0
        }

    \inferrule{
        x : A\dkpo \in \Gamma\dkpo
        \\
        \Gamma\dkpo \vdash _{\Sigma\dkpo} T\dkpo \rhd A\dkpo \Rightarrow  P\dkpo
    }{
        \Gamma\dkpo  \vdash_{\Sigma\dkpo} x  \cdot T\dkpo \Rightarrow  P\dkpo
    }

    \inferrule{
        c : A\dkpo \in \Sigma
        \\
        \Gamma\dkpo \vdash _{\Sigma\dkpo} S\dkpo \rhd A\dkpo \Rightarrow P\dkpo
    }{
        \Gamma\dkpo  \vdash_{\Sigma\dkpo} c \cdot S\dkpo \Rightarrow P\dkpo
    }

    \end{mathpar}

    \boxed{\Gamma\dk \vdash_{\Sigma\dk} T\dk \rhd A\dk \Rightarrow  P\dk} 
    \begin{mathpar}
        \inferrule{
        }{
            \Gamma\depth{0}\vdash_{\Sigma\depth{0}}  T\depth{0} \rhd A\depth{0} \Rightarrow P\depth{0}
        }

    \inferrule{
    }{
        \Gamma\dkpo  \vdash_{\Sigma\dkpo} \snil  \rhd P\dkpo \Rightarrow  P\dkpo
    }

    \inferrule{
        \Gamma\dkpo \vdash_{\Sigma\dkpo} M\dkpo \Leftarrow  B\dkpo \\
        [M\dk /x]^{\erased{B}} A\dkpo =  \depth{k+1}A' \dkpo
        \\
        \Gamma\dkpo \vdash_{\Sigma\dkpo} T\dkpo \rhd A'\dkpo \Rightarrow P\dkpo
    }{
        \Gamma\dkpo  \vdash_{\Sigma\dkpo} M\dkpo;T\dkpo  \rhd \Pi x : B\dkpo. \, A\dkpo \Rightarrow   P\dkpo
    }(*)

    \end{mathpar}

    \boxed{\Gamma\dk \vdash_{\Sigma\dk} S\dk \rhd A\dk \Rightarrow  P\dk} 
    \begin{mathpar}
        \inferrule{
        }{
            \vdash_{\Sigma\depth{0}}  S\depth{0} \rhd A\depth{0} \Rightarrow P\depth{0}
        }

    \inferrule{
    }{
        \Gamma\dkpo  \vdash_{\Sigma\dkpo} \snil  \rhd P\dkpo \Rightarrow  P\dkpo
    }

    \inferrule{
        \Gamma\dkpo \vdash_{\Sigma\dkpo} M\dkpo \Leftarrow  B\dkpo \\
        [M\dk /x]^{\erased{B}} A\dkpo =  \dkpo A' \dkpo
        \\
        \Gamma\dkpo \vdash_{\Sigma\dkpo} S\dkpo \rhd A'\dkpo \Rightarrow P\dkpo
    }{
        \Gamma\dkpo  \vdash_{\Sigma\dkpo} M\dk;S\dkpo  \rhd \Pi x : B\dkpo. \, A\dkpo \Rightarrow   P\dkpo
    }

    \end{mathpar}


We state and prove substitution lemmas for CoLF$^\omega$.
\begin{theorem}[Substitution]
    Given $\Sigma$, and depth $k$, we have
    \begin{enumerate}
        \item 
     Substitution in canonical terms: 

    If $\Gamma\dkpo \vdash N\dkpo \Leftarrow A\dkpo$, and $\Gamma\dkpo, x : A\dkpo, \Gamma'\dkpo\vdash M\dkpo \Leftarrow B\dkpo$, 

    then $\Gamma\dkpo, [N\depth k/x] \Gamma'\dkpo \vdash [N\dkpo/x] M\dkpo \Leftarrow  [N\dk/x] B\dkpo$.
    
        \item 
     Substitution in canonical types: 

    If $\Gamma\dkpo \vdash N\dkpo \Leftarrow \dkpo A$, and $\Gamma\dkpo, x : A\dkpo, \Gamma'\dkpo \vdash B\dkpo \Leftarrow \typecotype$, 
    then $\Gamma\dkpo, [N\depth k/x]\ \Gamma'\dkpo  \vdash [N\depth k/x] B\dkpo \Leftarrow \typecotype$.

    \end{enumerate}
\end{theorem}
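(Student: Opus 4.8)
The plan is to strengthen the statement into a single property asserting substitution preservation for \emph{every} judgment form of the theory simultaneously --- canonical terms ($M \Leftarrow A$), neutral terms ($R \Rightarrow P$), continuing and suspended spines, kinds, canonical and atomic types, and contexts --- and to prove them all by one mutual induction. This strengthening is forced: substituting into a canonical term at depth $k+1$ recursively substitutes into its continuing spine, and under a constant head it substitutes into a suspended spine whose arguments live at depth $k$, so the inductive invariant must already speak about all of these. I would order the induction lexicographically by the triple $(\tau, k, \mathcal{D})$, where $\tau = \erased{(A\dkpo)}$ is the erased simple type at which we substitute, $k$ is the observation depth, and $\mathcal{D}$ is the derivation of the subject's typing judgment. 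This is exactly the order in which hereditary substitution is defined, so each recursive call made by the substitution algorithm is matched by an appeal to the induction hypothesis at a strictly smaller tuple.

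Two observations make this induction well-founded. First, $\tau$ is a \emph{simple} type, and simple types are finitary, so induction on $\tau$ is legitimate even though the terms are infinitary. Second, by the Productivity theorem a single observation of a canonical term exposes only finitely much structure before reaching either the empty spine or a constant head whose arguments sit at depth $k$; hence induction on $\mathcal{D}$ at a fixed depth is induction over a finite object, and the drop from $k+1$ to $k$ at constant heads is absorbed by the $k$-component. The earlier theorem that hereditary substitution respects observation depth guarantees, in addition, that every substituted subterm actually exists at precisely the depth the relevant typing premise demands.

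Most cases are routine reassembly. For a $\lambda$-abstraction I extend the context and invoke the hypothesis at the same $(\tau,k)$ with a smaller derivation; for a neutral term $y \cdot T$ whose head $y$ differs from $x$ I push the substitution into $T$ and rebuild $y \cdot T'$; for a constant head $c \cdot S$ I substitute into the suspended spine, discharging its depth-$k$ typing premises by the hypothesis at the smaller depth $k$, and rebuild $c \cdot S'$; and the atomic-type case $a \cdot S$ is identical in form to the constant-head case. The interesting case is the neutral term $x \cdot T$ whose head is the very variable being substituted, where $[N\dkpo/x](x \cdot T) = N'$ is obtained by first substituting into the spine to get $T'$ and then performing the spine application $T' \rhd^\tau N = N'$. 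Here I would prove, as part of the same mutual induction, a spine-application typing lemma: whenever $\Gamma \vdash N \Leftarrow A$ and $\Gamma \vdash T' \rhd A \Rightarrow P$, the application $T' \rhd^{\erased A} N$ is defined and yields $\Gamma \vdash N' \Leftarrow P$. Because $x$ is declared with type $A$ to the left of itself in the context, $A$ is well-formed without $x$ and so $[N/x]A = A$; the substituted spine is therefore genuinely applied at the original type $A$, and each argument the spine consumes triggers a hereditary substitution into a $\lambda$-body at the strictly smaller domain type, so the hypothesis at smaller $\tau$ applies.

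The main obstacle is getting this three-measure lexicographic order to line up exactly with the definitional recursion of hereditary substitution, so that the spine-application lemma's internal substitutions provably decrease $\tau$ (with $k$ and the subject structure permitted to remain fixed), while the constant-head case is permitted to decrease $k$ and enlarge the subject. Once this alignment is pinned down and cross-checked against the depth-preservation theorem, every individual case reduces to a direct structural manipulation, and part (2) for types follows from the same induction with strictly less work, since type and kind structure is finitary.
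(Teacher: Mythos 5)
Your proposal is correct and follows essentially the same route as the paper: the paper's proof is a lexicographic induction on $\tau$, $k$, and the structure of the subject (deferring details to the standard hereditary-substitution arguments of Watkins et al.\ and Harper--Licata), and your mutual strengthening across all judgment forms together with the spine-application lemma for the head-variable case is precisely the content of that standard argument. The only cosmetic difference is that you induct on the typing derivation rather than the subject's syntactic structure, which is equivalent here since the rules are syntax-directed.
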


\begin{proof}
    Induction on $\tau$, $k$, and the structure of $M$ or $B$. This proof is similar to the ones in \citet{Watkins02tr,Harper07jfp}.
\end{proof}

\subsection{Validity of Infinitary Terms}


With signature $\Sigma\depth\omega$, we say that a type $A$ is inductive if $A = \Pi x_1\dots \Pi x_n:A_n. a\cdot S$ and $a : \Pi y_1 \dots \Pi y_m : B_m.\, \type$, 
and a type $A$ coinductive if $A = \Pi x_1\dots \Pi x_n:A_n. a\cdot S$ and $a : \Pi y_1 \dots \Pi y_m : B_m.\, \cotype$.
A constructor $c$ is inductive if $c : A \in \Sigma$ and $A$ is inductive, and $c$ is coinductive if $c : A \in \Sigma$ and $A$ is coinductive.
As with CoLF \cite{Chen23fossacs}, a priority is assigned to each type: types and type families whose kinds are declared later in the signature 
have higher priority than types whose kinds are declared earlier. Term constructors inherit priorities from their types.

Intuitively, an infinitary term $M$ is valid if for all infinitary traces $T$ of $M$, some coinductive constructor 
occurs infinitely often along the trace $T$. In particular, 
a term in which there is an infinite stack of inductive constructors is not valid.

Recall that a trace is a sequence of constructor constants or variables, whose length is possibly infinite.
A term $M$ is \emph{trace-valid} if for all infinite traces $T$ of $M$, 
there is a coinductive constructor that occurs 
infinitely often along $T$, and that coinductive constructor has a higher priority 
than any other constructor that occurs infinitely often along $T$.
In other words, for any infinite trace $T$ of $M$, the highest priority constructor that occurs infinitely often 
along $T$ must be coinductive.

\begin{theorem}[Validity entails productivity]
    If $M$ is trace-valid, then $M$ is productive.
\end{theorem}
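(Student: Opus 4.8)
The plan is to prove the statement essentially by unfolding the two definitions and observing that they line up once one recalls that coinductive constructors are, in particular, constants. Both productivity and trace-validity are conditions asserted separately of each infinite trace of $M$, so it suffices to fix an arbitrary infinite trace $T$ of $M$ and reason about it in isolation.

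First I would fix an arbitrary infinite trace $T \in \m{traces}(M)$. Applying the hypothesis that $M$ is trace-valid to this particular $T$, I obtain a coinductive constructor $c$ that occurs infinitely often along $T$; the additional priority clause of trace-validity — that this $c$ dominates every other constructor occurring infinitely often — is stronger than what is needed here and can be discarded. Since every coinductive constructor is by definition one of the constants of the signature (the coinductive constructors form a subset of the syntactic class of constants), the infinitely many occurrences of $c$ along $T$ are, in particular, infinitely many occurrences of constants along $T$. Because $T$ was an arbitrary infinite trace, this shows that every infinite trace of $M$ contains infinitely many occurrences of constants, which is precisely the definition of productivity, so $M$ is productive.

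The argument carries no real obstacle; the only point requiring a moment's care is the degenerate case in which $M$ has no infinite traces at all (for instance $M = \lambda x.\, x$). In that situation both trace-validity and productivity hold vacuously, and the universally quantified argument above — ranging over the empty set of infinite traces of $M$ — covers it without modification. The essential content is simply that the coinductiveness demanded by trace-validity already supplies a constant that recurs infinitely often, and productivity asks for nothing more than that constants recur infinitely often.

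I would close by remarking that the implication is strictly one-directional, consistent with the statement of the theorem: the term $\lambda x.\, c \cdot (c \cdot (c \cdot (\dots)))$ with $c$ an inductive constructor is productive, since its unique infinite trace $c, c, c, \dots$ contains infinitely many constants, yet it is not trace-valid, because the highest-priority constructor occurring infinitely often along that trace is the inductive $c$ rather than a coinductive one. Hence trace-validity is a properly stronger condition than productivity.
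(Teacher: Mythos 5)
Your proof is correct and matches the paper's own argument: both extract from trace-validity the fact that some coinductive constructor occurs infinitely often along each infinite trace, observe that coinductive constructors are constants, and conclude productivity directly from the definition. The extra remarks on the vacuous case and the strictness of the implication are fine but not needed.
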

\begin{proof}
    Along any infinite trace $T$ through $M$, there exist infinitely many coinductive
     constructors along $T$ by trace-validity, and then it is productive by the presence of constructors.
\end{proof}

We note that hereditary substitution preserves the validity of terms.
\begin{theorem} [Trace-validity closed under hereditary substitution]
    If both $N$ and $M$ are trace-valid, then $[N/x]^{\tau} M$ is trace-valid if defined.
\end{theorem}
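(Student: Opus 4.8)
The plan is to fix an arbitrary infinite trace $\hat{T}$ of the result $P = [N/x]^\tau M$ and to show that the highest-priority constructor occurring infinitely often along $\hat{T}$ is coinductive. By the Corollary that hereditary substitution of productive terms is again productive (together with the two ``Hereditary Substitution Respects Observation Depth'' theorems), $P$ is a canonical term, so by the Productivity theorem $\hat{T}$ is a genuine trace to which trace-validity applies. The central structural observation I would establish is that $\hat{T}$ decomposes into maximal segments, each lying entirely within one copy of $M$ or within one inserted copy of $N$: a copy of $N$ is entered exactly at a head that was the variable $x$ in $M$ (where $[N/x]^\tau(x \cdot T)$ grafts $N$ in through the continuing-spine application $T' \rhd^\tau N$), and it is exited exactly at a head that is one of the top-level $\lambda$-bound variables of $N$, which the spine application replaces by an argument drawn from $M$. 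Since constants never vary under substitution, every constructor occurring along $\hat{T}$ is inherited verbatim either from $M$ or from $N$, with its priority unchanged.

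I would then split on how often $\hat{T}$ crosses between $M$-copies and $N$-copies. If it crosses only finitely often, then some tail of $\hat{T}$ lies inside a single copy of $M$ or of $N$ and is therefore a suffix of an infinite trace of that component. Because trace-validity depends only on which constructors occur \emph{infinitely often}, deleting the finite prefix is harmless, and trace-validity of $M$ (respectively $N$) immediately yields the required coinductive dominance for $\hat{T}$.

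The remaining case, in which $\hat{T}$ crosses infinitely often, is where the real work lies. Here every $N$-detour is finite (it runs from the head of $N$ to one of its top binders), and the key lemma I would prove is that concatenating the $M$-segments, skipping over the detours, yields a single \emph{infinite} trace $T_M$ of $M$: each $M$-segment ends at an $x$-head whose children in $M$ are exactly the spine arguments into which the following detour exits, so consecutive $M$-segments meet at a legitimate parent--child step of $M$. Trace-validity of $M$ then controls the constructors coming from $T_M$. The main obstacle is controlling the constructors injected by the infinitely many finite $N$-detours: a priori such a constructor could occur infinitely often across detours without lying on any single infinite trace of $N$, so trace-validity of $N$ does not bound it directly. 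Resolving this is exactly where the global validity of the signature $\Sigma$ and its priority assignment become indispensable; I would argue that the priority of any constructor emitted along a detour is dominated, in the parity sense, by the coinductive constructor that trace-validity of $M$ supplies along $T_M$ at the very binder through which the detour is reached, so that the highest priority occurring infinitely often in $\hat{T}$ coincides with that of $T_M$ and is coinductive. Making this domination precise --- tying the priority of detour constructors to the coinductive guard crossed on $T_M$ --- is the delicate heart of the argument and the step I expect to require the most care.

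Finally, I would emphasize that although hereditary substitution is itself defined by induction on the observation depth, trace-validity is a parity (infinite-occurrence) property that cannot be read off from any finite depth in isolation; the argument must therefore be carried out at the level of whole infinite traces, as above, rather than by the lexicographic depth induction used for the preceding substitution lemmas.
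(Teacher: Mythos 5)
Your decomposition is actually sharper than the paper's own proof. The paper simply asserts that every infinite trace of $[N/x]^{\tau}M$ is a zipping of \emph{one} trace $T_1$ of $M$ with \emph{one} trace $T_2$ of $N$, at least one of which is infinite, and then applies trace-validity of whichever component is infinite. You correctly observe that this is not the true picture: when the trace crosses infinitely often, the $N$-portion consists of infinitely many \emph{finite} detours through distinct copies of $N$, and their concatenation is not a trace of $N$ at all, so trace-validity of $N$ (a condition only on \emph{infinite} traces of $N$) gives no control over the constructors those detours contribute. Your finite-crossing case is sound, and so is your lemma that the $M$-segments splice, across the deleted $x$-heads, into a single infinite trace $T_M$ of $M$.

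However, the step you defer as the ``delicate heart'' --- that the priority of every constructor deposited by a detour is dominated by the coinductive constructor supplied by trace-validity of $M$ along $T_M$ --- is not merely delicate: it is false, so the proof cannot be completed along these lines (and the paper's one-zipping argument breaks at exactly the same point). Priorities are fixed by declaration order in $\Sigma$ and bear no relation to how constructors are divided between $M$ and $N$. Concretely, take the signature $a : \cotype$, $b : \type$, $s : b \to a$, $d : a \to b$, so that $d$ inherits the higher priority of the later-declared inductive family $b$; let $x$ be a variable of type $a \to b$, and put
\[ M = x \cdot (s \cdot (x \cdot (s \cdot (\cdots)))), \qquad N = \lambda y.\, d \cdot (y). \]
Then $M$ is trace-valid (its unique infinite trace is $x, s, x, s, \dots$, and the only constructor occurring infinitely often along it is the coinductive $s$), $N$ is trace-valid vacuously (its only trace is the finite $d, y$), both are well-typed, and the substitution is defined; yet
\[ [N/x]^{* \to *} M = d \cdot (s \cdot (d \cdot (s \cdot (\cdots)))) \]
is a perfectly good productive canonical term whose unique infinite trace has both $d$ and $s$ occurring infinitely often, with the \emph{inductive} $d$ of highest priority --- so the result is not trace-valid. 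Each detour deposits one $d$, and no single infinite trace of $N$ witnesses their accumulation: this is precisely the loophole you identified. The obstruction is therefore fatal to the statement as given; closing it requires strengthening the hypotheses (for instance, a condition tying the priorities of the constructors occurring in $N$ to the coinductive guards surrounding the occurrences of $x$ in $M$, or a restriction on substituted variables in the spirit of CoLF's prepattern discipline), not a finer analysis of traces.
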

\begin{proof}
    An analysis of the definition of hereditary substitution reveals that 
    any infinite trace of $[N/x]^\tau M$ is a zipping of a trace $T_2$ of $N$ and $T_1$ of $M$, 
    at least one of which is infinite.
    Then since both $N$ and $M$ are productive, the highest priority constructor that occurs infinitely often 
    along $T_2$ or $T_1$ or both (depending on which ones are infinite) will be coinductive. Thus, 
    the infinite trace formed as a result of zipping will have an infinitely occurring coinductive constructor 
    of the highest priority.
\end{proof}

\section{Interpretation of finitary signatures}
\label{sec:interpretation_of_colf_signatures}

We now present a finitary signature $\Sigma$ in the style of CoLF \cite{Chen23fossacs} that can be interpreted into a CoLF$^\omega$ signature $\Sigma\depth\omega$.
Type checking in CoLF$^\omega$ is undecidable up to depth $\omega$, 
but is decidable for any finite depth $k < \omega$.
Moreover, if the recursive definitions in the finitary signature satisfy the prepattern restriction of CoLF, 
that all recursion constants are applied to bound variables, then we have decidable 
type checking up to depth $\omega$. 
The reason is that if the terms that appear in the signature are all rational, 
then the signature may be regarded as a CoLF signature for the purpose of type checking, while 
semantically remains to be interpreted in CoLF$^\omega$.

\begin{longtable}{lrcl}
    Signatures & $ \Sigma $ & $::= $ & $ \cdot \mid \Sigma, a : K \mid \Sigma, c : A \mid \Sigma, r : A = M  $ \\
    Contexts & $\Gamma$ & $::= $ & $\cdot \mid \Gamma, x : A $  \\
    Kinds & $K$ & $ ::= $ & $\type \mid \cotype \mid \Pi x : A.\ K $ \\
    Canonical types & $A, B$ & $ ::= $ & $P \mid \Pi x : B.\ A  $ \\
    Atomic types & $P$ & $ ::= $ & $a \cdot S$ \\
    Canonical terms & $M$ & $ ::= $ & $ R  \mid \lambda x.\, M $\\
    Neutral terms & $R$ & $ ::= $ & $ x \cdot S \mid c \cdot S \mid r \cdot S  $\\
    Spines & $S$ & $ ::= $ & $  \snil\mid M ; S  $\\
\end{longtable}
The infinitary terms are given by terms computed by a fixed point definition.
To ensure that recursive definitions are still contractive and productive, 
we require the head of $M$ in a definition $r : A = M$ to be a constant $c$.
\footnote{This turns out to be more restrictive than CoLF, because we would like to rule out definitions, e.g. $r = \lambda x. x\, (r\, x)$, which 
are not productive. CoLF signatures allow nonproductive terms because substitutions for recursive terms are restricted to variable renaming only. }
We define the expansion of a finitary CoLF signature $\Sigma$ into an infinitary CoLF$^\omega$ signature of depth $k$
by the function $\exp\dk$, and similarly for all syntax categories. The expansion functions are implicitly 
parametrized by a signature $\Sigma$, providing the recursive definitions.
We have two choices when expanding a CoLF spine $S$, 
a suspended spine $S\dk$ or a continuing spine $T\dk$. We use $\exp^S\dk$ for expansion into suspended spines
 and  $\exp^T\dk$ for expansion into continuing spines. The definition of the expansion is as follows.

 \begin{longtable} {lll}
    \boxed{\exp\dk(\Sigma)=\dk \Sigma\dk} \\
    $\exp\depth0(\Sigma) $ & $= \depth0\Sigma\depth0$\\
    $\exp\dkpo(\Sigma, a : K) $ & $=\dkpo \exp\dkpo(\Sigma), a : K\dkpo$\\
    $\exp\dkpo(\Sigma, c : A) $ & $=\dkpo \exp\dkpo(\Sigma), c : A\dkpo$\\
    $\exp\dkpo(\Sigma, r : A = M) $ &$ =\dkpo \exp\dkpo(\Sigma)$\\
    \boxed{\exp\dk(\Gamma)=\dk \Gamma\dk} \\
    $\exp\depth0(\Gamma) = \depth0\Gamma\depth0$\\
    $\exp\dkpo(\Gamma, x : A) $ & $=\dkpo \exp\dkpo(\Gamma), x : A\dkpo$\\
    \boxed{\exp\dk(K)=\dk K\dk} \\
    $\exp\depth0(K) = \depth0K\depth0$\\
    $\exp\dkpo(\type) $ & $=\dkpo \type$\\
    $\exp\dkpo(\cotype) $ & $=\dkpo \cotype$\\
    $\exp\dkpo(\Pi x : A.\, K) $ & $=\dkpo \Pi x : \exp\dkpo(A).\, \exp\dkpo(K)$\\
    \boxed{\exp\dk(A)=\dk A\dk} \\
    $\exp\depth0(A) = \depth0A\depth0$\\
    $\exp\dkpo(P) $ & $=\dkpo \exp\dkpo(P)$\\
    $\exp\dkpo(\Pi x : B.\, A) $ & $=\dkpo \Pi x : \exp\dkpo(B).\, \exp\dkpo(A)$\\
    \boxed{\exp\dk(P)=\dk P\dk} \\
    $\exp\depth0(P) = \depth0P\depth0$\\
    $\exp\dkpo(a \cdot S) $ & $=\dkpo a \cdot \exp^S\dkpo(S)$\\
    \boxed{\exp\pdepth k(M)=\pdepth{k}M\dk} \\
    $\exp\pdepth {0}(
        M
    ) $ & $ = \pdepth 0 M\depth0$ \\
    $\exp\pdepth {k+1}(
        \lambda x.\, M
    ) $ & $ = \pdepth{k+1} \lambda x. \exp\pdepth{k+1}(M)$ \\
    $\exp\pdepth {k+1}(
        R
    ) $ & $ = \pdepth{k+1} \exp\dkpo(R)$ \\
    \boxed{\exp\pdepth k(R)=\pdepth{k}R\dk}
    \\
    $\exp\pdepth {0}(
        R
    ) $ & $ = \pdepth 0 R\depth0$ \\
    $\exp\pdepth {k+1}(
        c \cdot S
    ) $ & $ =\pdepth{k+1} c\cdot (\exp^S\pdepth{k}(S))$\\
    $\exp\pdepth {k+1}(
        x \cdot S
    ) $ & $ = \pdepth{k+1}x\cdot (\exp^T\pdepth{k}(S))$\\
    $\exp\pdepth {k+1}(
       r \cdot S
    ) $ & $ = \pdepth{k+1} \exp^T\pdepth{k+1}(S) \rhd^{\tau} \exp\dkpo(M)$ 
    \\& if $r : A = M \in \Sigma$ and $\tau = (\exp\dkpo(A))^o$\\
    \boxed{\exp^T\pdepth k(S)=\pdepth{k}T\dk}
    \\
    $\exp^T\pdepth {0}(
        S
    ) $ & $ = \pdepth 0 T\depth0$ \\
    $\exp^T\pdepth {k+1}
        \snil
     $ & $ =\pdepth{k+1} \snil$\\
    $\exp^T\pdepth {k+1}(
        M ; S
    ) $ & $ = \pdepth{k+1}(\exp\pdepth{k+1}(M)); (\exp^T\pdepth{k+1}(S))$\\
    \boxed{\exp^S\pdepth k(S)=\pdepth{k}S\dk}
    \\
    $\exp^S\pdepth {0}(
        S
    ) $ & $ = \depth0 S \depth0$ \\
    $\exp^S\pdepth {k+1}
        \snil
     $ & $ =\pdepth{k+1} \snil$\\
    $\exp^S\pdepth {k+1}(
        M ; S
    ) $ & $ = \pdepth{k+1}(\exp\pdepth{k}(M)); (\exp^S\pdepth{k+1}(S))$\\
    \end{longtable}

\begin{theorem}
    [Semi Decidability of Type Checking]
    Given a finitary signature $\Sigma$, it is decidable whether $\vdash\dk \exp\dk(\Sigma) \validsig$ for any $k$.
    And similarly for other type checking judgments.
\end{theorem}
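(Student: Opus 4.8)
The plan is to reduce the depth-$k$ judgment to a terminating computation over \emph{finite} objects, for each fixed finite $k < \omega$ (the procedure being uniform in $k$). The first step is a \textbf{finiteness lemma}: for every finite $k$, each syntactic entity of depth $k$ (signature $\Sigma\dk$, context $\Gamma\dk$, kind $K\dk$, type $A\dk$, atomic type $P\dk$, term $M\dk$, and the two kinds of spines) is a finite tree. This is proved by induction on $k$. The only way an entity of depth $k+1$ can reference depth $k+1$ again is through the inductive productions (the $\lambda$-chain, variable heads $x\cdot T\dkpo$, and continuing spines), which are least fixed points and hence finitely deep; every descent through a constant head $c\cdot S\dkpo$ passes to a suspended spine whose arguments sit at depth $k$, and these are finite by the induction hypothesis. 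This is precisely the content of the Productivity theorem: a single observation is finitary and halts at constants. Since types and kinds are purely inductive, they are finite at every depth, so all the equalities $=\dk$ appearing as premises are decidable, being structural identity of finite trees.

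Next I would show that the expansion $\exp\dk$ is a total computable function from a finitary signature to a finite depth-$k$ object, and that the auxiliary operations used as premises — type erasure $(-)^o$, the hereditary substitutions $[N\dk/x]^\tau(-)$ on terms, spines, types, kinds, and contexts, and the spine applications $\rhd^\tau$ — all terminate on finite inputs. The auxiliary operations are immediate from the lexicographic induction on $\tau$, $k$, and subject structure under which they are defined, together with the two ``Hereditary Substitution Respects Observation Depth'' theorems, which guarantee the outputs are again finite depth-$k$ (resp.\ depth $k+1$) objects. \textbf{The main obstacle is the termination of $\exp\dk$ on recursion constants.} In the clause $\exp\dkpo(r\cdot S) = \exp^T\dkpo(S)\rhd^{\tau}\exp\dkpo(M)$ for $r : A = M \in \Sigma$, the body $M$ is expanded at the \emph{same} depth $k+1$, so a naive recursion need not decrease $k$. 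This is where the restriction that the head of every $M$ be a constant $c$ is essential: writing $M = \lambda\overline{x}.\,c\cdot S'$, the expansion $\exp\dkpo(M) = \lambda\overline{x}.\,c\cdot\exp^S\dk(S')$ defers all further expansion — including any reoccurrence of $r$ — to a suspended spine at depth $k$. Thus each unfolding of a recursion constant strictly decreases the depth, and $\exp\dk$ terminates by a lexicographic measure on $k$ and the finitary subject, while the interleaved spine applications $\rhd^\tau$ terminate by the auxiliary argument above.

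Finally I would argue decidability of the type checking judgments themselves. They are defined simultaneously by induction on $k$ and the structure of the subject, and the rules are syntax-directed: the applicable rule is fixed by whether $k = 0$ and by the top constructor of the subject (and of the classifying type, in the checking and spine rules). By the finiteness lemma the subject is a finite tree, and every premise either invokes a judgment at strictly smaller $k$, a structurally smaller subject, a decidable equality $=\dk$, or one of the terminating auxiliary operations. Hence each query unfolds into finitely many strictly-smaller queries, and well-founded recursion yields a terminating decision procedure. Combining the three steps: given finitary $\Sigma$ and finite $k$, first compute $\Sigma\dk = \exp\dk(\Sigma)$, then run the decision procedure for $\vdash\dk \Sigma\dk \validsig$; both halt, so the question is decidable. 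The remaining judgments (on $\Gamma\dk$, $K\dk$, $A\dk$, $M\dk$, and the spines) are handled identically, establishing the ``and similarly'' clause.
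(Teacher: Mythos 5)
Your proposal is correct and follows essentially the same route as the paper, whose entire proof is that the depth-$k$ expansion can be computed and that the type checking judgments, being defined by induction on the observation depth (and subject structure), are decidable. The only substantive addition is your termination argument for $\exp\dkpo$ on recursion constants via the head-constant restriction forcing each unfolding under a suspended spine at strictly smaller depth — a point the paper leaves implicit but which is exactly the right justification for its claim that the expansion "can be computed."
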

\begin{proof}
    Given a finitary signature, its depth $k$ expansion can be computed. The type checking judgments are defined by induction on
    the observation depth and are thus decidable.
\end{proof}

\subsection{Infinitary  Canonical Forms}

Given a finitary signature $\Sigma$, 
let $\Sigma\depth\omega$ be the expansion $\m{exp}\depth\omega(\Sigma)$.
For a type $a : \type$ or $a : \cotype$ in $\Sigma$, 
the possibly infinitary canonical terms $M\depth \omega$ for the type $a$ in a context $\Gamma\depth\omega$ are such that 
$\Gamma\depth\omega\vdash_{\Sigma\depth\omega}  M\depth\omega \Leftarrow a$, and $M\depth\omega$ is valid.
$\Gamma\depth\omega\vdash_{\Sigma\depth\omega}  M\depth\omega \Leftarrow a$ means that for all observation depth $k$, the partial observation of $M\depth\omega$ at depth $k$, $M\dk$, satisfies
$\Gamma\dk \vdash_{\Sigma\dk} M \dk \Leftarrow a$. 

The above definition may generalize to type families indexed by terms. That is,
for a type family 
$a : \Pi x_1: A_1.\,\dots, \Pi x_n : A_n.\,\type$ or 
$a : \Pi x_1: A_1.\,\dots, \Pi x_n : A_n.\,\cotype$,
the possibly infinitary canonical terms $M\depth \omega$ for the type $a$ indexed 
by a spine of possibly infinitary terms, $a\cdot((M_1)\depth\omega, \dots, (M_n)\depth\omega)$ in a context $\Gamma\depth\omega$
are such that 
$\Gamma\depth\omega\vdash_{\Sigma\depth\omega}  M\depth\omega \Leftarrow a\cdot ((M_1)\depth\omega, \dots, (M_n)\depth\omega)$, 
and $M\depth\omega$ is valid.
That is, for all observation depth $k>0$, the partial observation of $M\depth\omega$ at depth $k$, $M\dk$, satisfies
$\Gamma\dk \vdash_{\Sigma\dk} M \dk \Leftarrow a\cdot ((M_1)\depth{k-1}, \dots, (M_n)\depth{k-1})$, and $M\depth\omega$ needs 
to satisfy the global validity condition.
In both cases, we may omit the phrase ``in a context $\Gamma\depth\omega$'' and just say $M\depth\omega$ is a canonical term for a type $a$ with a possible spine, if $\Gamma\depth\omega$ is an empty context.

The reader may verify the signatures presented in Section~\ref{sec:examples_of_colf_omega} 
provide adequate encodings of infinitary objects. As with Section~\ref{sec:examples_of_colf_omega}, 
we use the Twelf's concrete syntax \cite{Pfenning99cade} in the presentation of finitary signatures, where 
$\Pi$-abstractions are written using curly braces and $\lambda$-abstractions are written using square brackets. 
Implicit $\Pi$-abstractions are created for capitalized types.

\section{Encoding Productive B\"ohm Trees}
\label{sec:encoding_productive_bohm_trees}

We present an encoding of $\bot$-free B\"ohm trees  (not necessarily productive) in 
CoLF$^\omega$. Then we present two ways to encode productivity.

Below is the signature ($\Sigma_4$) for encoding $\bot$-free B\"ohm trees. \verb#ctm# stands for canonical terms,  
\verb#ntm# stands for neutral terms, \verb#vars# stands for a distinguished
set of variables, and \verb#consts# stands for a distinguished set of constants. Both \verb#ctm# and \verb#spine# are finite because we don't want 
terms consisting of infinitely many consecutive $\lambda$-abstractions or spines of infinitary length.
\label{colf:signature_four}
\begin{multicols}{2}
\begin{verbatim}
%% Signature 4: 
ctm : type.
spine : type.
ntm : cotype.
vars : type.
consts : type.
lam : (vars -> ctm) -> ctm.
base : ntm -> ctm.
varntm : vars -> spine -> ntm.
constntm : consts -> spine -> ntm.
snil : spine.
scons : ctm -> spine -> spine.
\end{verbatim}
\end{multicols}

The term $I = \lambda x. \, x$ may be encoded as \verb#lam ([x] base (varntm x snil))#. 
The term $Z = \lambda x. \, x \cdot (\lambda y. \, y \cdot (\lambda z. \, z \cdot (\dots)))$, not productive and not typeable, 
can be encoded as the $\Sigma_4$-typeable term \verb#lam ([x] base (varntm x (scons (...) snil)))# of type \verb#ctm#. 
The \verb#...# will be another encoding of the term $Z$, continuing indefinitely. 
Let the encoding of $Z$ subsequently be referred to as \verb#tmZ#.
\verb#tmZ# is a valid term because of the coinductive constructor \verb#varntm#
infinitely occurring along the only one infinite trace through \verb#tmZ#. 
Note that \verb#tmZ# also has infinitely many finite traces such as \verb#lam, base, varntm, scons, snil#, which 
validity does not concern.

\begin{theorem}
    [Adequacy of Encoding]
    There is a bijection between the $\bot$-free B\"ohm trees and 
    the infinitary canonical terms of type \verb$ctm$.
\end{theorem}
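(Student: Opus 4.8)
The plan is to exhibit a compositional encoding $\Phi$ and a decoding $\Psi$ and to show they are mutually inverse, adapting the standard LF adequacy method to the infinitary setting. As always with higher-order abstract syntax, the statement is understood relative to a context assigning a distinct \verb$vars$-variable to each free B\"ohm-tree variable and a \verb$consts$-variable to each constant, with binding handled by \verb$lam$. I would define $\Phi$ on a B\"ohm-tree node $\lambda \vec x.\, h \cdot (M_1;\dots;M_n)$ by emitting the finite stack of \verb$lam$ abstractions, then \verb$base$, then \verb$varntm$ or \verb$constntm$ according to whether the head $h$ is a variable or a constant, then the finite \verb$scons$/\verb$snil$ spine built from $\Phi(M_1);\dots;\Phi(M_n)$, recursing into the arguments $M_i$; $\Psi$ reads off the same data in reverse. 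Because each clause reads a bounded amount of its argument and emits a bounded chunk of output before recursing (and symmetrically in reverse), both maps are well defined on infinitary terms, either by corecursion or, equivalently, as the common limit of the finite-depth approximations $\Phi\dk$ and $\Psi\dk$.

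I would then establish soundness of $\Phi$. Well-typedness---that $\Phi(M)\depth\omega$ checks against \verb$ctm$ at every observation depth---follows by induction on the depth from the typing rules of $\Sigma_4$, since each \verb$lam$, \verb$base$, \verb$varntm$/\verb$constntm$, and \verb$scons$ step matches exactly one constructor clause. The essential content is the validity check, and this is where the priority assignment does the work. In $\Sigma_4$ the coinductive constructors \verb$varntm$ and \verb$constntm$ (producing the lone \verb$cotype$ \verb$ntm$) are declared after \verb$ctm$ and \verb$spine$, hence outrank \verb$lam$, \verb$base$, \verb$snil$, and \verb$scons$. Since a B\"ohm tree has no infinite abstraction chain and no infinite application chain, every infinite trace of $\Phi(M)$ threads through \verb$ntm$ infinitely often, so the highest-priority constructor occurring infinitely often along any infinite trace is \verb$varntm$ or \verb$constntm$, which is coinductive; thus $\Phi(M)$ is trace-valid. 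The converse is the same equivalence run backwards: trace-validity of a \verb$ctm$ term forbids an infinite stack of the inductive constructors \verb$lam$ or \verb$scons$, so all abstraction prefixes and all spines are finite, which is precisely B\"ohm well-formedness, and hence $\Psi$ is total on valid, well-typed \verb$ctm$ terms and lands in genuine B\"ohm trees. The key lemma to isolate is therefore: for \verb$ctm$ terms, trace-validity holds if and only if the decoded tree satisfies the B\"ohm constraints.

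Finally I would verify $\Psi \circ \Phi = \mathrm{id}$ and $\Phi \circ \Psi = \mathrm{id}$. Since equality of infinitary terms is equality up to every finite depth ($=\depth k$ for all $k$), it suffices to check agreement at each observation depth, a routine clause-by-clause induction on depth (equivalently a coinduction), the two definitions being literal inverses at every constructor; compositionality under substitution of free variables is inherited from the HOAS representation and is needed only to phrase the bijection uniformly in the context. I expect the main obstacle to be reconciling the two notions of observation depth: the encoding inserts the extra constructors \verb$base$, \verb$varntm$/\verb$constntm$, and \verb$scons$, each of which costs one \verb$ctm$-observation, so a single B\"ohm-tree observation does not correspond to a single \verb$ctm$-observation. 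The remedy is monotonicity---driving the \verb$ctm$-depth $j$ and the B\"ohm-tree depth $k$ to infinity together---so that agreement and validity established at all finite depths transfer to depth $\omega$; carefully tying the trace-validity argument to this non-uniform depth correspondence is the part that will need the most care.
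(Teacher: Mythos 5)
Your proposal is correct and follows essentially the same route as the paper: a compositional encoding/decoding pair shown to be mutually inverse by induction on observation depth, with the validity (trace) condition on \texttt{ctm}-terms identified with the B\"ohm constraints of no infinite abstraction or application chains, via the priority of the lone coinductive family \texttt{ntm} over \texttt{ctm} and \texttt{spine}. The paper handles the depth mismatch you flag by re-stratifying B\"ohm trees so that a single observation reveals an abstraction, a head-spine form, or the next spine element, rather than by your monotonicity/limit argument, but this is only a minor technical variation.
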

\begin{proof}
    We impose an observation structure on $\bot$-free B\"ohm trees, such that a single observation 
    may reveal  a $\lambda$-abstraction, a head-spine form, or the next element in a spine. Then, we can 
    establish bijective correspondences between $\bot$-free B\"ohm trees and their encodings by induction on the 
    observation depth.

    The validity condition on the canonical terms of type \verb$ctm$ 
    corresponds to the exact requirement of $\bot$-free B\"ohm trees, that 
    there are no infinite chains of $\lambda$-abstractions or applications.
\end{proof}

\subsection{Encoding the Productivity Condition}

We present two different ways we can incorporate productivity conditions, an 
external approach and an internal approach. 
Both approaches encode object-level productivity as framework-level validity.
The external approach encodes the productivity 
condition externally as a predicate on terms, and no modification to the previous signature is required.
We define new predicates beginning with \verb#prod# on all three syntactic classes and clauses for them below ($\Sigma_5$). 
The suffix \verb#c# means canonical terms, \verb#n# means neutral terms, \verb#t# means 
continuing spines, and \verb#s# means suspended spine. Constants are followed by suspended spines, which are the only 
progress points.
\begin{verbatim}
%% Signature 5:
prodc : ctm -> type.
prodn : ntm -> type.
prodt : spine -> type.
prods : spine -> cotype.


p1 : ({x} prodc (M x)) -> prodc (lam M).
p2 : prodn R -> prodc (base R).
p3 : prodt S -> prodn (varntm V S).
p4 : prods S -> prodn (constntm C S).
p5 : prodt snil.
p6 : prodc M -> prodt S -> prodt (scons M S).
p7 : prods snil.
p8 : prodc M -> prods S -> prods (scons M S).
\end{verbatim}

The rules are defined to be syntax-directed so given any term \verb#t# of type \verb#ctm#, 
a search for the derivation \verb#prodc t# (i.e. a term of type \verb#prodc t#) will deterministically pick rules 
between \verb#p1# and \verb#p8# corresponding to the syntactic structure. 
For example, let the term \verb#tmI# be the encoding of $I = \lambda x.\, x$, 
a search for a term of the type \verb#prodc tmI# will be \verb#p1 ([x] p2 (p3 p5))#, which is a valid 
term because it is finite. 

Within the signature $\Sigma_5$, observe that the only coinductive type is \verb#prods#, and the 
only coinductive constructor constructors are \verb#p7# and \verb#p8# because they are constructors for the coinductive type family $\mt{prods}$.
Given a nonproductive term \verb#t#, 
there cannot be a valid term of type \verb#prodc t#.
Every infinite trace of \verb#t# will correspond to some infinite trace in a term of type \verb#prodc t#, 
and for every constant along that trace, each element in its spine will correspond to a coinductive constructor \verb#p8#. 
Since \verb#t# is not productive, 
there exists an infinite trace with only finitely many constants, and then the 
corresponding infinite trace of \verb#prodc t# will contain only finitely many \verb#p8#'s, 
and thus is not a valid term. For example, consider the supposed 
construction of the term \verb#pTmZ# of type  \verb#prodc tmZ# where \verb#tmZ# encodes $Z$, 
since there are no constants in $Z$, the term \verb#pTmZ# does not involve \verb#p7# or \verb#p8#.
It is infinite because it follows the syntax of \verb#tmZ#, and therefore it cannot be valid.
In summary, the predicate beginning with \verb#prod# encodes the productivity of the underlying 
term.

\begin{theorem}
    [Adequacy of Encoding]
    Given a $\bot$-free B\"ohm tree $M$, and its infinite encoding $\enc{M}$, 
    the $M$ is valid if and only if the type $\mt{prodc} \,\enc{M}$ is inhabited.
\end{theorem}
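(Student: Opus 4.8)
The plan is to exhibit, for each $\bot$-free B\"ohm tree $M$, the unique candidate inhabitant of $\mt{prodc}\,\enc{M}$, and then to reduce the \emph{validity} (i.e.\ productivity) of that candidate to the productivity of $M$ through a trace-level correspondence. First I would observe that the clauses \texttt{p1}--\texttt{p8} are syntax-directed: the head constructor of $\enc{M}$ (one of \texttt{lam}, \texttt{base}, \texttt{varntm}, \texttt{constntm}, \texttt{snil}, \texttt{scons}) forces exactly one clause to be used in building a witness of the corresponding \texttt{prod}-predicate. Proceeding by coinduction on observation depth, this determines a unique infinitary term $D_M$ that checks against $\mt{prodc}\,\enc{M}$ at every finite depth $k$. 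By the definition of infinitary canonical forms, the only possible obstruction to $D_M$ being a genuine inhabitant is the global trace-validity condition; hence $\mt{prodc}\,\enc{M}$ is inhabited if and only if $D_M$ is trace-valid.

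Next I would establish the trace correspondence, which is the heart of the argument. By the adequacy of the $\Sigma_4$ encoding the structure of $\enc{M}$ mirrors that of $M$, and by the syntax-directed construction the structure of $D_M$ mirrors that of $\enc{M}$. Tracking how the clauses build $D_M$: a $\lambda$-binder in $M$ yields \texttt{p1}; a variable head $x\cdot(M_1;\dots;M_n)$ yields \texttt{p2}, \texttt{p3}, and a chain of \texttt{p6}'s traversing the continuing (inductive) predicate \texttt{prodt}; and a constant head $c\cdot(M_1;\dots;M_n)$ yields \texttt{p2}, \texttt{p4}, and a chain of \texttt{p8}'s traversing the suspended (coinductive) predicate \texttt{prods}. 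Consequently every infinite trace $T$ of $D_M$ projects to an infinite trace $T'$ of $M$, and \texttt{p8} occurs infinitely often along $T$ exactly when a constant head occurs infinitely often along $T'$. The \texttt{p6}'s coming from variable heads are never coinductive, and \texttt{p7}, being a nullary leaf, can only terminate a finite trace, so it never recurs on an infinite trace.

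I would then combine these facts. Among the predicates of $\Sigma_5$ only \texttt{prods} is a \cotype, and since it is declared last it and its constructors \texttt{p7}, \texttt{p8} carry strictly the highest priority. Hence along any infinite trace of $D_M$, trace-validity is equivalent to \texttt{p8} recurring infinitely often: the highest-priority infinitely-recurring constructor must be coinductive, and \texttt{p8} is the only coinductive constructor that can recur on an infinite trace. If $M$ is productive, every $T'$ carries infinitely many constants, so every $T$ carries infinitely many \texttt{p8}'s and $D_M$ is trace-valid; conversely, if $M$ is not productive, some $T'$ has only finitely many constants, so the corresponding $T$ has only finitely many \texttt{p8}'s, its infinitely-recurring constructors all lie among the inductive \texttt{p1}, \texttt{p2}, \texttt{p3}, \texttt{p6}, and trace-validity fails. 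Chaining the equivalences yields $M$ valid (productive) $\iff$ $D_M$ trace-valid $\iff$ $\mt{prodc}\,\enc{M}$ inhabited.

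The main obstacle will be making the trace correspondence precise, in particular accounting for how a trace branches through a spine: reaching the $i$-th argument of a constant head passes through $i$ copies of \texttt{p8}, while reaching the $i$-th argument of a variable head passes through $i$ copies of \texttt{p6}. One must verify that the coinductive/inductive split between \texttt{prods} and \texttt{prodt} cleanly separates the deferred (constant) observations from the eager (variable) ones, so that ``infinitely many constants along $T'$'' translates to ``infinitely many \texttt{p8}'s along $T$'' with no miscount arising from the observation-depth bookkeeping of suspended versus continuing spines.
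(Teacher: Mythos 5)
Your proposal is correct and follows essentially the same route as the paper's proof: use the syntax-directedness of \texttt{p1}--\texttt{p8} to obtain the unique candidate inhabitant of $\mt{prodc}\,\enc{M}$, then reduce inhabitation to trace-validity of that candidate via the correspondence between its infinite traces and those of $M$, with \texttt{p8} occurrences matching constant heads. Your write-up is merely more explicit than the paper's (the clause-by-clause trace bookkeeping and the priority argument for \texttt{prods} appear in the paper's surrounding discussion rather than inside its proof), but the decomposition and key ideas are the same.
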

\begin{proof}
    Since \verb$prodc$ is defined in a syntax-directed way,  
    given a term $M$, there exists a unique term $M'\depth\omega = \enc{M}$ such that 
    $\vdash_{\Sigma\depth\omega} M'\depth\omega \Leftarrow \mt{prodc}\,M\depth\omega$.
    Then, it suffices to show that the validity condition on $M'\depth\omega$ encodes the productivity condition.
    Any infinite trace of $M'\depth\omega$ corresponds to an infinite trace of $M$, and the productivity condition on 
    the infinite traces of $M$ exactly correspond to the validity condition of $M'\depth\omega$.

\end{proof}

Let us now turn to the internal approach. In the internal approach, instead of having 
a separate predicate encoding the validity, we postulate that only productive infinite terms 
are well-typed terms, and nonproductive are invalid by definition. 
The signature ($\Sigma_6$) below shows the internal approach.
The progress points are modeled through progress canonical terms ($\mt{p\_ctm}$), 
which appear only in suspended spines ($\mt{s\_spine}$) following constants.

\begin{minipage}{\textwidth}
\begin{multicols}{2}
\begin{verbatim}
%% Signature 6:
ctm : type.
ntm : type.
t_spine : type.
s_spine : cotype.
p_ctm: cotype.
vars : type.
consts : type.

lam : (vars -> ctm) -> ctm.
base : ntm -> ctm.
varntm : vars -> t_spine -> ntm.
constntm : consts -> s_spine -> ntm.
tnil : t_spine.
tcons : ctm -> t_spine -> t_spine.
snil : s_spine.
scons : p_ctm -> s_spine -> s_spine.
progress : ctm -> p_ctm.
\end{verbatim}
\end{multicols}
\end{minipage}

In $\Sigma_6$, the only coinductive constructor is \verb#progress#, which is used 
to guard the terms in suspended spines following constants.
We can see that the productivity of infinitary $\lambda$-terms is 
encoded as the validity of CoLF$^\omega$. 
The encoding of $I = \lambda x.\, x$ will be \verb#lam ([x] base (varntm x tnil))#, which is valid because it is finite.
The encoding of $Z = \lambda x. \, x \cdot (\lambda y. \, y \cdot (\lambda z. \, z \cdot (\dots)))$ is 
\verb#lam ([x] base (varntm x# \verb#(tcons (...) tnil)))#, where \verb#...# denotes the same 
encoding of $Z$. This term is not valid because the encoding does not contain the only 
coinductive constructor \verb#progress# and there exists an infinite trace through the term.
If we fix $c$ as a constant, $\mt{c : consts}$,
the term $Y = \lambda x.\, c \cdot (x \cdot (c \cdot ( x \cdot (\dots))))$, is encoded as
\verb$lam ([x] base (constntm c (scons (progress (base (varntm x (tcons (...)$ \verb$tnil)))) snil)))$.
This encoding is valid because of the occurrences of $\mt{progress}$ along the infinite trace through the term

\begin{theorem}
    [Adequacy of Encoding]
    There is a bijection between the productive $\bot$-free B\"ohm trees $\lambda$-terms and 
    the infinitary canonical terms of type \verb$ctm$.
\end{theorem}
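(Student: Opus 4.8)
The plan is to prove the bijection in two stages, mirroring and then combining the two earlier adequacy arguments: a purely structural correspondence, as in the adequacy proof for $\Sigma_4$, followed by a validity-versus-productivity argument, as in the external encoding $\Sigma_5$, now internalized into the well-typedness of the type \texttt{ctm}. First I would set up the structural encoding. As before, I impose an observation structure on $\bot$-free B\"ohm trees in which a single observation reveals a maximal run of $\lambda$-abstractions, the head, and the head's argument list, deferring the arguments themselves. The encoding $\enc{\cdot}$ sends a $\lambda$-abstraction to \texttt{lam}, a variable head to \texttt{base (varntm \dots)} with its arguments threaded through the continuing spine \texttt{t\_spine} (built from \texttt{tcons}/\texttt{tnil}), and a constant head to \texttt{base (constntm \dots)} with its arguments threaded through the suspended spine \texttt{s\_spine} (built from \texttt{scons}/\texttt{snil}), wrapping each argument in the coinductive \texttt{progress}. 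I would then prove, by induction on the observation depth, that $\enc{\cdot}$ and its evident inverse are mutually inverse between $\bot$-free B\"ohm trees and the well-typed infinitary terms of type \texttt{ctm} in $\Sigma_6$, setting aside for the moment the validity side-condition. This step is essentially the $\Sigma_4$ argument adapted to the refined signature, and I expect it to be routine.

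The substance of the theorem is that, under this bijection, the trace-validity condition carried by the canonical terms of type \texttt{ctm} is exactly productivity of the corresponding B\"ohm tree. The key point is that $\Sigma_6$ is arranged so that the coinductive family \texttt{p\_ctm}, with sole constructor \texttt{progress}, sits at the argument positions of constants and is assigned a strictly higher priority than the inductive structural families \texttt{ctm} and \texttt{ntm} (containing \texttt{lam}, \texttt{base}, \texttt{varntm}, \texttt{constntm}). Since \texttt{progress} occurs along a trace exactly when the trace descends into an argument of a constant, an infinite trace of $\enc{M}$ passes through infinitely many \texttt{progress} nodes iff the corresponding infinite trace of $M$ contains infinitely many constants. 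I would make this trace correspondence precise, matching each infinite trace of $M$ with the trace of $\enc{M}$ that threads a run of \texttt{lam}'s, then \texttt{base}, the head constructor, the relevant spine constructors, and, for a constant head, \texttt{progress}, before descending into the chosen argument. Reading off the highest-priority infinitely-recurring constructor then shows it is coinductive iff it is \texttt{progress}, i.e.\ iff the trace is productive; combined with the structural bijection, valid \texttt{ctm}-terms correspond exactly to productive $\bot$-free B\"ohm trees.

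The hard part will be the priority bookkeeping in the second stage, together with the auxiliary traces introduced by the list encoding. I must verify that no inductive constructor can be the highest-priority infinitely-recurring constructor along a trace that ought to be valid, and, symmetrically, that a non-productive tree---one with only finitely many constants along some infinite trace, hence finitely many \texttt{progress}---is genuinely invalidated because its decisive recurring constructor is then the inductive \texttt{tcons}, \texttt{varntm}, or \texttt{lam}. The subtle case is a trace that walks down a spine through \texttt{tcons} or \texttt{scons} without descending into any element: traces through the inductive \texttt{t\_spine} force finiteness of variable argument lists, matching the finite arity of the head-spine form, whereas traces through the coinductive \texttt{s\_spine} must be reconciled with the intended observation behaviour of suspended spines. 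Checking that this reconciliation creates no valid terms outside the image of $\enc{\cdot}$ and invalidates no genuine productive tree is the technical heart of the argument.
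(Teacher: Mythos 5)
Your proposal is correct and follows essentially the same route as the paper's proof: a structural bijection established by stratifying $\bot$-free B\"ohm trees into observation depths, followed by a trace-by-trace argument that validity of the \verb$ctm$-term (driven by the sole coinductive constructor \verb$progress$ guarding arguments of constants) coincides with productivity of the encoded tree. Your version is more explicit about the priority bookkeeping and the degenerate spine-walking traces, but these are elaborations of the same two-stage argument the paper sketches, not a different approach.
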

\begin{proof}
    As with the previous proof, we stratify the $\bot$-free B\"ohm trees $\lambda$-terms by observation depths,
     and establish the bijective correspondence with infinitary canonical terms of type \verb$ctm$.
    The validity condition is established similarly by a condition on traces: for each trace in a productive B\"ohm tree, there is a corresponding trace in the infinitary canonical term of type \verb$ctm$.
    The productivity corresponds to the validity: each occurrence of a constant implies the occurrences of the 
    \verb$progress$ coinductive constructor for each argument on its spine.
\end{proof}

\section{Case Study: Co-Natural Numbers and Co-Binary Numbers}
\label{sec:conatural_numbers_and_cobinary_numbers}

Recall the definition of unary conatural numbers \cite{Chen23fossacs}, which are 
exactly natural numbers together with $\infty$ where $\infty = \m{cosucc}\, \infty$.

\begin{verbatim}
conat : cotype.
cozero : conat.
cosucc : conat -> conat.
\end{verbatim}

Similar to the way we define binary numbers for ``little-endian'' presentations \cite{Pfenning19talkplmw}, 
where we observe the least significant digits first in an inductive datatype, we define 
the analogous ``little-endian'' presentation for conats using binary streams. 
In this representation, every conatural number that is finite has a unique representation: \verb$b1$ and \verb$b0$'s 
followed by an infinite stack of \verb$b0$'s. The conatural number $\infty$ has infinitely 
many representations, as long as \verb$b1$'s occur infinitely often in the stream.
\footnote{The reader may notice that the cobinary numbers are defined exactly the same as the real numbers in Section~\ref{sec:real_numbers}.
Thus, the definition of the $\mt{bplus}$ relation later is also a definition of the sum of two real numbers.}

\begin{verbatim}
cobin : cotype.
b0 : cobin -> cobin.
b1 : cobin -> cobin.
\end{verbatim}
\begin{theorem}
   [Adequacy of Encoding]

   There exists a bijection between the cobinary numbers and the canonical terms of type \verb$cobin$.
\end{theorem}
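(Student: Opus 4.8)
The plan is to reuse the template of the preceding adequacy arguments (for streams and for the $\bot$-free B\"ohm trees): impose on cobinary numbers an observation structure that mirrors the infinitary grammar, construct a depth-indexed family of bijections by induction on the observation depth, and glue these into a single bijection at depth $\omega$ using the equality-up-to-depth apparatus of Section~\ref{sec:equality_infinitary_syntax_tree}. A single observation on a cobinary number reveals its leading bit, $\mt{b0}$ or $\mt{b1}$, together with a residual cobinary number to be observed one step later; this is exactly the bit-stream grammar ${\mi{BS}}\dkpo ::= b0, {\mi{BS}}\dk \mid b1, {\mi{BS}}\dk$ recalled in Section~\ref{sec:infinitary_syntax_trees}.

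The first step is to pin down exactly which infinitary terms check against $\mt{cobin}$. Since $\mt{cobin}$ is an atomic (co)type, a canonical term $M\dkpo \Leftarrow \mt{cobin}$ must be a neutral term; in the empty context no variable can head it, and the only constants whose target is $\mt{cobin}$ are $\mt{b0}, \mt{b1} : \mt{cobin} \to \mt{cobin}$. Hence $M\dkpo = \mt{b0} \cdot (M'\dk; \snil)$ or $\mt{b1} \cdot (M'\dk; \snil)$, where $M'\dk$ is again a canonical term of type $\mt{cobin}$, now at depth $k$, because the spine following a constant is a suspended spine whose argument is deferred to the next observation. This recursion coincides with the one governing the partial observations of a bit stream, so the map $\phi$ given corecursively by $\phi(\langle c, s\rangle) = c \cdot (\phi(s); \snil)$, where $c \in \{\mt{b0}, \mt{b1}\}$ is the leading bit and $s$ the tail, is a bijection at each finite depth. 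I would check injectivity (two distinct streams differ at some finite depth, hence their encodings differ at that depth) and surjectivity (the case analysis above shows that nothing else type-checks) by a routine induction on $k$.

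The validity side condition is discharged essentially for free, and making this explicit is the only step worth dwelling on. Because $\mt{cobin}$ is declared a $\cotype$, both $\mt{b0}$ and $\mt{b1}$ are coinductive constructors, and every infinite trace through a term of type $\mt{cobin}$ consists solely of $\mt{b0}$'s and $\mt{b1}$'s. Therefore, along any infinite trace the highest-priority constructor occurring infinitely often is automatically coinductive, so \emph{every} infinitary canonical term of type $\mt{cobin}$ is trace-valid. The validity requirement thus cuts out nothing, and the finite-depth bijections extend to a bijection between all cobinary numbers and all valid depth-$\omega$ canonical terms of type $\mt{cobin}$.

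I do not expect a genuine obstacle: this is the most degenerate of the adequacy statements, as $\mt{cobin}$ carries no index, no binders, and only coinductive constructors. The one point demanding care is the passage to depth $\omega$: a cobinary number is determined by its observations at all finite depths, and a canonical term is likewise determined by its partial observations $M\dk$, so the compatible family of finite-depth bijections glues into a single bijection at depth $\omega$, exactly as in the stream adequacy argument.
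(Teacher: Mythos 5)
Your proposal is correct and follows essentially the same route as the paper's (very terse) proof: characterize cobinary numbers by their observation structure and establish the bijection by induction on observation depth and term structure. Your additional explicit observations — that only $\mt{b0}\cdot(M'\dk;\snil)$ and $\mt{b1}\cdot(M'\dk;\snil)$ can type-check in the empty context, and that trace-validity is automatic because every constructor of the $\cotype$ $\mt{cobin}$ is coinductive — are accurate elaborations of steps the paper leaves implicit.
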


\begin{proof}
   We again characterize cobinary numbers by observation, and prove the correspondence directly by induction on the observation depth and the structure of the terms or
   the structure of cobinary numbers.

\end{proof}

We have the following examples of cobinary numbers. \verb$bzero$ is an encoding of the cobinary $0$.
\verb$bone$ is an encoding of the cobinary number $1$. \verb$w1$ and \verb$w2$ are two examples of the cobinary number $\infty$.
\verb$bsucc$ is an encoding of the ``successor'' relation between two cobinary numbers. 
 
\begin{verbatim}
bzero : cobin = b0 (bzero).
bone : cobin = b1 (bzero).
w1 : cobin = b1 (w1).
w2 : cobin = b1 (b0 w2).

bsucc : cobin -> cobin -> cotype.
bsucc/0 : bsucc (b0 X) (b1 X).
bsucc/1 : bsucc X Y -> bsucc (b1 X) (b0 Y).
\end{verbatim}
\begin{theorem}
   [Adequacy of Encoding]
   \verb$bsucc B1 B2$ is inhabited 
   if and only if \verb$B2$ is a successor of \verb$B1$. 
\end{theorem}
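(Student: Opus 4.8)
The plan is to prove this as an adequacy statement in the same observational style as the preceding theorems, after first fixing the external meaning of ``successor.'' I read ``$B_2$ is a successor of $B_1$'' as the little-endian binary increment relation on cobinary streams, characterized by observation exactly as the two clauses suggest: on observing the least significant bit, if $B_1 = \mt{b0}\, X$ then its successor is $\mt{b1}\, X$ (no carry), and if $B_1 = \mt{b1}\, X$ then its successor is $\mt{b0}\, Y$ where $Y$ is the successor of $X$ (the carry propagates). This relation is total and functional on cobinary numbers; on the finite representations it agrees with conatural-number successor, and it is precisely the coinductive relation generated by reading $\mt{bsucc/0}$ and $\mt{bsucc/1}$ as its two observation rules. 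With this reading the theorem asserts that the CoLF$^\omega$ type family $\mt{bsucc}$ is adequate for this relation.

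For the forward direction I would show that any valid infinitary canonical term inhabiting $\mt{bsucc}\, B_1\, B_2$ forces $B_2$ to be the successor of $B_1$ at every observation depth. Proceeding by induction on the observation depth, a single observation of the term exposes its head constructor, which by the type-checking rules must be $\mt{bsucc/0}$ or $\mt{bsucc/1}$. In the first case the indices are constrained to $B_1 = \mt{b0}\, X$ and $B_2 = \mt{b1}\, X$, matching the base clause; in the second case they are constrained to $B_1 = \mt{b1}\, X$ and $B_2 = \mt{b0}\, Y$ with a subderivation inhabiting $\mt{bsucc}\, X\, Y$, and the induction hypothesis one depth down gives that $Y$ is the successor of $X$, so $B_2 = \mt{b0}\, Y$ is the successor of $\mt{b1}\, X = B_1$. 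As this holds at every depth and the increment relation is itself defined by observation, $B_2$ is the successor of $B_1$.

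For the backward direction I would build the inhabitant corecursively from the hypothesis that $B_2$ is the successor of $B_1$, dispatching on the least significant bit of $B_1$: when $B_1 = \mt{b0}\, X$ emit $\mt{bsucc/0}$, and when $B_1 = \mt{b1}\, X$ emit $\mt{bsucc/1}$ applied to the inhabitant of $\mt{bsucc}\, X\, Y$ obtained corecursively, where $Y$ is the successor of $X$. Totality of the increment relation guarantees that the construction is always defined, and a depth-indexed argument shows that the term type-checks at every depth. It then remains to check that the term is \emph{trace-valid}: I would analyze its infinite traces and observe that each one either stays within the derivation spine, where the coinductive constructor $\mt{bsucc/1}$ recurs infinitely often, or diverts into one of the cobinary index arguments, where the coinductive constructors $\mt{b0}$ and $\mt{b1}$ recur infinitely often. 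In every case the highest-priority constructor occurring infinitely often is coinductive, so the term is valid and hence a genuine inhabitant in the model.

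The main obstacle is the validity bookkeeping in the backward direction. Because $\mt{bsucc/1}$ carries its two cobinary indices in its spine, every node of the derivation spawns side traces through $\mt{b0}$ and $\mt{b1}$, and I must confirm that the priority ordering (the family $\mt{bsucc}$ is declared after $\mt{cobin}$, so its constructors have the higher priority) never yields a trace whose highest infinitely-occurring constructor is inductive. The crux is to show that no infinite trace mixes $\mt{bsucc/1}$ with the bit constructors infinitely often --- each infinite trace is eventually constant in its constructor class --- which is exactly what makes the constructed derivation productive. The rest is the routine depth-indexed correspondence used in the earlier adequacy proofs.
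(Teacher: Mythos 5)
Your proof is correct and follows essentially the same route as the paper, which argues directly by induction on the observation depth and the structure of the term or of the informal successor derivation. The extra care you take with trace-validity in the backward direction (noting that each infinite trace eventually stays either in the derivation spine through $\mt{bsucc/1}$ or inside a cobinary index through $\mt{b0}/\mt{b1}$, both coinductive) is detail the paper's one-line proof leaves implicit, not a different approach.
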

\begin{proof}
    Directly by induction on the observation depth, and the structure of the term 
    or the structure of the informal proof that \verb$B2$ is a successor of \verb$B1$.
\end{proof}

We may define conversions \verb$tobin$ and \verb$frombin$ establishes 
between unary numbers and binary numbers.

\begin{verbatim}
coplus : conat -> conat -> conat -> cotype.
coplus/0 : coplus cozero Y Y.
coplus/1 : coplus X Y Z
                -> coplus (cosuss X) Y (cosucc Z).

tobin : conat -> cobin -> cotype.
tobin/0 : tobin cozero bzero.
tobin/1 : 
    tobin X Y
    -> bsucc Y Z
    -> tobin (cosucc X)  Z.

frombin : cobin -> conat -> cotype.
frombin/0 : frombin X Y
    -> coplus Y Y Z
    -> frombin (b0 X) Z.
frombin/1 : frombin X Y
    -> coplus Y Y Z
    -> frombin (b1 X) (cosucc Z).
\end{verbatim}

We then define the cobinary plus relation. The cobinary plus sums up two cobinary numbers. 
The definition will not require a base case as cobinary numbers are defined coinductively.
\begin{verbatim}
bplus : cobin -> cobin -> cobin -> cotype.
bplus/00 : 
    bplus X Y Z ->
    bplus (b0 X) (b0 Y) (b0 (Z)).
bplus/01 : 
    bplus X Y Z ->
    bplus (b0 X) (b1 Y) (b1 (Z)).
bplus/10 : 
    bplus X Y Z ->
    bplus (b1 X) (b0 Y) (b1 (Z)).
bplus/11 : 
    bplus X Y Z ->
    bsucc Z W ->
    bplus (b1 X) (b1 Y) (b0 W).
\end{verbatim}

We show some simple invariants of binary streams, such as that $0+0=0$, and that the successor of $0$ is $1$.

\begin{verbatim}
b0+0is0 : bplus bzero bzero bzero =  bplus/00 (b0+0is0).
bsucc0is1 : bsucc bzero bone = bsucc/0.
\end{verbatim}
 
 

We may show that \verb$bplus$ is indeed sound by 
presenting an encoding of the proof for the following \emph{cobinary sum soundness} theorem:
For all cobinary numbers $N$ and $M$, if $M$ is the ``successor'' of $N$, then 
the cosuccessor of the unary conatural number corresponding to $N$ is 
equal to the cosuccessor of the unary conatural number corresponding to $M$.
The statement of the theorem is encoded as the \verb$bsucc_sound$ type family and 
the proof is encoded as terms inhabiting the type family. 
This technique follows the Twelf \cite{Pfenning99cade,Schurmann03tphol}.
 

\begin{verbatim}
eqconat : conat -> conat -> cotype.
eqconat/0 : eqconat cozero cozero.
eqconat/1 : eqconat X Y -> eqconat (cosucc X) (cosucc Y).

eqconat/refl : eqconat X X -> cotype.
eqconat/refl/0 : eqconat/refl eqconat/0.
eqconat/refl/1 : eqconat/refl EQR
    -> eqconat/refl (eqconat/1 EQR).

bsucc_sound : {N}{M} {CN} {CM} frombin N CN -> frombin M CM 
            -> bsucc N M -> eqconat (cosucc CN) CM -> cotype.
bsucc_sound/0 : 
    eqconat/refl EQC
    -> bsucc_sound (b0 N') (b1 N')  
        (frombin/0 N'FB N'CP) 
        (frombin/1 N'FB N'CP) 
        bsucc/0 
        (eqconat/1 EQC).
bsucc_sound/1 : 
    bsucc_sound N' M' N'FB M'FB SN'isM' EQC
    ->
    bsucc_sound (b1 N') (b0 M')  
        (frombin/1 N'FB N'CP) 
        (frombin/0 M'FB M'CP) 
        (bsucc/1 SN'isM') 
        (eqconat/1 EQC).
\end{verbatim}

\begin{theorem}
   [Adequacy of Encoding]
   There is a bijection between the informal derivation showing the cobinary sum soundness theorem 
   and the canonical terms of the fully instantiated type family type \verb$bsucc_sound$.

   That is, there is a bijection between the canonical terms of the 
    type \[\mt{bsucc\_sound}\, \enc{N}\, \enc{M}\,
   \enc{\mi{CN}}\,\enc{\mi{CM}}\,\enc{\mi{SNM}}\,\enc{\mi{EQSCNCM}}\] and the derivation (proof) of the following 
   theorem: for all cobinary numbers $N$ and $M$, if $M$ is the ``successor'' of $N$ (evidenced 
by a derivation $\mi{SNM}$), then 
$\mi{CN}$, the cosuccessor of the unary conatural number corresponding to $N$, is 
equal to $\mi{CM}$, the cosuccessor of the unary conatural number corresponding to $M$, 
evidenced by $\mi{EQSCNCM}$.
\end{theorem}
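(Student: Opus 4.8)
The plan is to reuse the two-layer strategy of the preceding adequacy results in this section. The principal layer is a depth-indexed, compositional bijection between informal co-derivations of the soundness statement and canonical terms of the fully instantiated type; the secondary layer is the observation that the global validity condition adds no further constraint here. Since every type family in this signature ($\mt{bsucc\_sound}$, $\mt{frombin}$, $\mt{coplus}$, $\mt{eqconat}$, and the rest) is declared $\cotype$, every constructor is coinductive, so along any infinite trace through an encoded term the highest-priority constructor occurring infinitely often is necessarily coinductive. Trace-validity therefore holds automatically, and by the theorem that validity entails productivity the encoded terms are genuine productive B\"ohm trees without any extra argument.

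Concretely, I would first impose an observation structure on the informal proof objects. The soundness proof is established by coinduction and its proof tree is a priori infinite, its shape tracking the bit pattern of $N$; a single observation reveals the rule applied at the root together with the finitary data at that level and defers the corecursive tail, matching the CoLF$^\omega$ convention that progress is made exactly when a constant is crossed into a suspended spine. I would then define the encoding and decoding maps by recursion on the structure of the driving derivation $\mi{SNM} : \mt{bsucc}\,N\,M$, which is what the proof recurses on: when $\mi{SNM}$ is $\mt{bsucc}/0$ (so $N = \mt{b0}\,N'$, $M = \mt{b1}\,N'$) the root encodes to a term headed by $\mt{bsucc\_sound}/0$, and when $\mi{SNM}$ is $\mt{bsucc}/1$ applied to a sub-derivation of $\mt{bsucc}\,N'\,M'$ (so $N = \mt{b1}\,N'$, $M = \mt{b0}\,M'$) it encodes to $\mt{bsucc\_sound}/1$ applied to the encoding of the corecursive sub-proof for $N'$ and $M'$. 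An induction on observation depth lifts this rule-by-rule correspondence to whole, possibly infinite, objects.

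Surjectivity and uniqueness of the decoding follow from the syntax-directedness of type checking together with the dependent structure of the indices. Here I would use that the six fixed indices of the instantiated type determine the head constructor: only one of $\mt{bsucc\_sound}/0$, $\mt{bsucc\_sound}/1$ has a conclusion matching the given indices—the former forcing the two $\mt{frombin}$ premises to share their sub-derivations and supplying a reflexivity certificate, the latter forcing the carried shapes—and its premises are thereby determined. The $\mt{frombin}$, $\mt{coplus}$, and $\mt{eqconat}$ side-conditions translate through the analogous adequacy for those auxiliary families, established by the same method, and because the indices are fixed in the type, the encoded term must reuse exactly the index sub-derivations rather than re-encode them.

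The step I expect to be the main obstacle is precisely this compositional index-matching across the interleaved coinductive families: verifying that the observation structure imposed on the informal proof synchronizes with the depth at which the $\mt{frombin}$, $\mt{coplus}$, and $\mt{eqconat}$ index derivations are observed, so that the depth-$k$ partial proof and the depth-$k$ partial term correspond exactly. Once the depths are confirmed to march in lock-step, the inductive step is routine and the bijection follows.
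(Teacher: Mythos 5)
Your proposal is correct and follows essentially the same route as the paper's own (very terse) proof: observe that the informal soundness derivation is infinite, then establish the compositional bijection by induction on observation depth together with the structure of the term or the driving derivation. Your additional observations—that trace-validity is automatic because every family in this signature is a $\cotype$, and that the depths of the interleaved index derivations must march in lock-step—are correct refinements of the same argument rather than a different approach.
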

\begin{proof}
    The derivation for the proof of the theorem is infinite.
    Then the adequacy proof follows directly by induction on the 
    observation depth, and then the structure of  term or the proof.
\end{proof}

\section{Related Work}

\textbf{Observation Depth.} The idea of observation depth is 
inspired by prior works in handling infinitary structures, such as non-terminating computations in 
the semantics of programming languages, including
step-indexed logical relations \cite{Ahmed04phd}, 
the semantics of mixed induction and coinduction in subtyping systems for recursive types \cite{Lakhani22esop} and 
in sized types \cite{Abel16jfp,Somayyajula22fscd},
the tree topology of B\"ohm trees \cite{Barendregt85book} and infinitary lambda calculus \cite{Kennaway97,Severi17mscs}, 
and prior works in the semantics and metatheory of CoLF \cite{Chen21ms,Chen23fossacs}.

\textbf{Logical Frameworks.} The use of typed $\lambda$-terms to model syntax trees dates back to 
the logical framework LF \cite{Harper93jacm}. The idea of using only $\beta$-normal-$\eta$-long of typed $\lambda$-terms is first 
seen in the development of Concurrent LF \cite{Watkins02tr}, along with the 
technique of hereditary substitution. The metatheory of LF with only canonical terms 
is further developed in the research of 
encoding the simply typed $\lambda$-calculus and its metatheorems in Twelf \cite{Pfenning99cade,Watkins02tr,Harper07jfp}. 
Twelf also incorporates a logic programming engine on LF signatures 
and mechanical checking for metatheorems \cite{Pfenning91book,Schurmann03tphol}. 
The development of a logic programming interpretation of CoLF$^\omega$ will be future work.

\textbf{Infinitary $\lambda$-Calculus.}
B\"ohm trees were first devised as a method for studying the solvability of $\lambda$-terms \cite{Barendregt85book}, and serve to 
semantically distinguish between $\lambda$-terms that do not have a normal form.
The $\bot$-free B\"ohm trees are also called hereditary head normal terms \cite{Tatsuta08flops}.
Various other forms of infinitary $\lambda$-calculus were studied as representations of programs involving recursion \cite{Huet98mscs,Kennaway97,Severi17mscs}.
However, most studies on infinitary $\lambda$-calculus are done in an untyped setting.
The notion of  productivity comes from Coquand's work on defining functions on infinitary objects 
in a type theory \cite{Coquand93lncs}, and from cut-elimination in linear logic with fixed points \cite{Fortier13csl,Derakhshan19arxiv,Derakhshan20arxiv}.
The notion of validity follows from a similar definition of CoLF \cite{Chen23fossacs} and the Horn $\mu$-calculus \cite{Charatonik98lics}.

\textbf{Levels of Priorities.}
In CoLF$^\omega$,
simple proofs by induction and proofs by coinduction have trivial embedding in this framework.
Almost all practical mixed inductive and coinductive structures and mixed inductive and coinductive proofs 
including subtyping systems \cite{Lakhani22esop}, use a simple two-level of priorities, where the induction is 
nested inside the coinduction. 
Our definition of infinitary syntax trees follows a similar approach where 
the observation depth serves as the coinductive progress point.
The validity condition, however, assumes 
an infinite number of levels of priorities, following the prior work of CoLF \cite{Chen23fossacs} 
and proof systems with mixed induction and coinduction \cite{Fortier13csl,Charatonik98lics}.

\section{Conclusion}

We have presented an interpretation of CoLF-style signatures that the term model may be arbitrary 
finitary and infinitary terms. We take the notion of finitary observation as central
to the definition of infinitary structures and characterize the 
infinite syntax trees inductively in terms of observation depth.  The equality on infinitary terms 
is just a bisimulation on the infinitary terms. We then characterize 
productivity as a subset of the infinitary syntax trees that are closed under hereditary substitution.
Typing structures are subsequently imposed on productive B\"ohm trees to define
 a dependently typed logical framework CoLF$^\omega$.
Finally, we have applied CoLF$^\omega$ in the encoding of productive B\"ohm trees and cobinary numbers
as case studies.

\textbf{Acknowledgement.} We thank Frank Pfenning for insightful discussions and comments on the drafts of this paper.
We also thank Robert Harper for his comments on an early draft of this paper.

\bibliographystyle{plainnat}
\bibliography{citationsconf}

\begin{thebibliography}{26}
\providecommand{\natexlab}[1]{#1}
\providecommand{\url}[1]{\texttt{#1}}
\expandafter\ifx\csname urlstyle\endcsname\relax
  \providecommand{\doi}[1]{doi: #1}\else
  \providecommand{\doi}{doi: \begingroup \urlstyle{rm}\Url}\fi

\bibitem[Abel and Pientka(2016)]{Abel16jfp}
Andreas Abel and Brigitte Pientka.
\newblock Well-founded recursion with copatterns and sized types.
\newblock \emph{J. Funct. Program.}, 26:\penalty0 e2, 2016.
\newblock \doi{10.1017/S0956796816000022}.
\newblock URL \url{https://doi.org/10.1017/S0956796816000022}.

\bibitem[Ahmed(2004)]{Ahmed04phd}
Amal~Jamil Ahmed.
\newblock \emph{Semantics of Types for Mutable State}.
\newblock PhD thesis, Princeton University, November 2004.

\bibitem[Barendregt(1985)]{Barendregt85book}
Hendrik~Pieter Barendregt.
\newblock \emph{The lambda calculus - its syntax and semantics}, volume 103 of \emph{Studies in logic and the foundations of mathematics}.
\newblock North-Holland, 1985.
\newblock ISBN 978-0-444-86748-3.

\bibitem[Barendregt and Klop(2009)]{Barendregt09ic}
Henk Barendregt and Jan~Willem Klop.
\newblock Applications of infinitary lambda calculus.
\newblock \emph{Information and Computation}, 207\penalty0 (5):\penalty0 559--582, 2009.
\newblock \doi{10.1016/j.ic.2008.09.003}.

\bibitem[Basold(2018)]{Basold18phd}
Henning Basold.
\newblock \emph{Mixed Inductive-Coinductive Reasoning Types, Programs and Logic}.
\newblock PhD thesis, Radboud University, April 2018.
\newblock URL \url{https://hdl.handle.net/2066/190323}.

\bibitem[Bauer(2005)]{Bauer05cca}
Andrej Bauer.
\newblock Realizability as the connection between computable and constructive mathematics.
\newblock In \emph{Proceedings of CCA}, 2005.

\bibitem[Charatonik et~al.(1998)Charatonik, McAllester, Niwinski, Podelski, and Walukiewicz]{Charatonik98lics}
Witold Charatonik, David~A. McAllester, Damian Niwinski, Andreas Podelski, and Igor Walukiewicz.
\newblock The {H}orn mu-calculus.
\newblock In \emph{Proceedings of the Thirteenth Annual IEEE Symposium on Logic in Computer Science (LICS 1998)}, pages 58--69. IEEE Computer Society Press, June 1998.

\bibitem[Chen(2021)]{Chen21ms}
Zhibo Chen.
\newblock Towards a mixed inductive and coinductive logical framework.
\newblock Technical Report CMU-CS-21-144, Department of Computer Science, Carnegie Mellon University, 2021.

\bibitem[Chen and Pfenning(2023)]{Chen23fossacs}
Zhibo Chen and Frank Pfenning.
\newblock A logical framework with higher-order rational (circular) terms.
\newblock In Orna Kupferman and Pawel Sobocinski, editors, \emph{Foundations of Software Science and Computation Structures - 26th International Conference, FoSSaCS 2023, Held as Part of the European Joint Conferences on Theory and Practice of Software, {ETAPS} 2023, Paris, France, April 22-27, 2023, Proceedings}, volume 13992 of \emph{Lecture Notes in Computer Science}, pages 68--88. Springer, 2023.
\newblock \doi{10.1007/978-3-031-30829-1_4}.

\bibitem[Coquand(1993)]{Coquand93lncs}
Thierry Coquand.
\newblock Infinite objects in type theory.
\newblock In \emph{Types for Proofs and Programs, International Workshop TYPES'93, Nijmegen, The Netherlands, May 24-28, 1993, Selected Papers}, volume 806 of \emph{Lecture Notes in Computer Science}, pages 62--78. Springer, 1993.

\bibitem[Derakhshan and Pfenning(2019)]{Derakhshan19arxiv}
Farzaneh Derakhshan and Frank Pfenning.
\newblock Circular proofs as session-typed processes: {A} local validity condition.
\newblock \emph{CoRR}, abs/1908.01909, August 2019.
\newblock URL \url{http://arxiv.org/abs/1908.01909}.

\bibitem[Derakhshan and Pfenning(2021)]{Derakhshan20arxiv}
Farzaneh Derakhshan and Frank Pfenning.
\newblock Strong progress for session-typed processes in a linear metalogic with circular proofs.
\newblock \emph{CoRR}, abs/2001.05132, March 2021.
\newblock URL \url{http://arxiv.org/abs/2001.05132}.

\bibitem[Fortier and Santocanale(2013)]{Fortier13csl}
J{\'e}r{\^o}me Fortier and Luigi Santocanale.
\newblock Cuts for circular proofs: Semantics and cut-elimination.
\newblock In Simona Ronchi~Della Rocca, editor, \emph{22nd Annual Conference on Computer Science Logic (CSL 2013)}, pages 248--262, Torino, Italy, September 2013. LIPIcs 23.

\bibitem[Harper and Licata(2007)]{Harper07jfp}
Robert Harper and Daniel~R. Licata.
\newblock Mechanizing metatheory in a logical framework.
\newblock \emph{Journal of Functional Programming}, 17\penalty0 (4-5):\penalty0 613--673, 2007.

\bibitem[Harper et~al.(1993)Harper, Honsell, and Plotkin]{Harper93jacm}
Robert Harper, Furio Honsell, and Gordon Plotkin.
\newblock A framework for defining logics.
\newblock \emph{Journal of the Association for Computing Machinery}, 40\penalty0 (1):\penalty0 143--184, January 1993.

\bibitem[Huet(1998)]{Huet98mscs}
G{\'{e}}rard~P. Huet.
\newblock Regular {B}{\"{o}}hm trees.
\newblock \emph{Mathematical Structures in Computer Science}, 8\penalty0 (6):\penalty0 671--680, 1998.
\newblock URL \url{http://journals.cambridge.org/action/displayAbstract?aid=44783}.

\bibitem[Kennaway et~al.(1997)Kennaway, Klop, Sleep, and de~Vries]{Kennaway97}
Richard Kennaway, Jan~Willem Klop, M.~Ronan Sleep, and Fer{-}Jan de~Vries.
\newblock Infinitary lambda calculus.
\newblock \emph{Theoretical Computer Science}, 175\penalty0 (1):\penalty0 93--125, 1997.
\newblock \doi{10.1016/S0304-3975(96)00171-5}.

\bibitem[Lakhani et~al.(2022)Lakhani, Das, DeYoung, Mordido, and Pfenning]{Lakhani22esop}
Zeeshan Lakhani, Ankush Das, Henry DeYoung, Andreia Mordido, and Frank Pfenning.
\newblock Polarized subtyping.
\newblock In Ilya Sergey, editor, \emph{Programming Languages and Systems - 31st European Symposium on Programming, {ESOP} 2022, Munich, Germany, April 2-7, 2022, Proceedings}, volume 13240 of \emph{Lecture Notes in Computer Science}, pages 431--461. Springer, 2022.
\newblock \doi{10.1007/978-3-030-99336-8\_16}.

\bibitem[Pfenning(1991)]{Pfenning91book}
Frank Pfenning.
\newblock \emph{Logic Programming in the LF Logical Framework}, page 149–181.
\newblock Cambridge University Press, USA, 1991.
\newblock ISBN 0521413001.

\bibitem[Pfenning(2019)]{Pfenning19talkplmw}
Frank Pfenning.
\newblock How to think about types: Insights from a personal journey, January 2019.
\newblock URL \url{https://www.cs.cmu.edu/~fp/talks/plmw19-talk.pdf}.

\bibitem[Pfenning and Sch{\"{u}}rmann(1999)]{Pfenning99cade}
Frank Pfenning and Carsten Sch{\"{u}}rmann.
\newblock System description: Twelf - {A} meta-logical framework for deductive systems.
\newblock In Harald Ganzinger, editor, \emph{Automated Deduction - CADE-16, 16th International Conference on Automated Deduction, Trento, Italy, July 7-10, 1999, Proceedings}, volume 1632 of \emph{Lecture Notes in Computer Science}, pages 202--206. Springer, 1999.

\bibitem[Sch{\"{u}}rmann and Pfenning(2003)]{Schurmann03tphol}
Carsten Sch{\"{u}}rmann and Frank Pfenning.
\newblock A coverage checking algorithm for {LF}.
\newblock In David~A. Basin and Burkhart Wolff, editors, \emph{Theorem Proving in Higher Order Logics, 16th International Conference, TPHOLs 2003, Rom, Italy, September 8-12, 2003, Proceedings}, volume 2758 of \emph{Lecture Notes in Computer Science}, pages 120--135. Springer, 2003.
\newblock \doi{10.1007/10930755\_8}.
\newblock URL \url{https://doi.org/10.1007/10930755\_8}.

\bibitem[Severi and de~Vries(2017)]{Severi17mscs}
Paula Severi and Fer{-}Jan de~Vries.
\newblock The infinitary lambda calculus of the infinite eta b{\"{o}}hm trees.
\newblock \emph{Mathematical Structures in Computer Science}, 27\penalty0 (5):\penalty0 681--733, 2017.
\newblock \doi{10.1017/S096012951500033X}.

\bibitem[Somayyajula and Pfenning(2022)]{Somayyajula22fscd}
Siva Somayyajula and Frank Pfenning.
\newblock Type-based termination for futures.
\newblock In Amy~P. Felty, editor, \emph{7th International Conference on Formal Structures for Computation and Deduction, {FSCD} 2022, August 2-5, 2022, Haifa, Israel}, volume 228 of \emph{LIPIcs}, pages 12:1--12:21. Schloss Dagstuhl - Leibniz-Zentrum f{\"{u}}r Informatik, 2022.
\newblock \doi{10.4230/LIPIcs.FSCD.2022.12}.
\newblock URL \url{https://doi.org/10.4230/LIPIcs.FSCD.2022.12}.

\bibitem[Tatsuta(2008)]{Tatsuta08flops}
Makoto Tatsuta.
\newblock Types for hereditary head normalizing terms.
\newblock In Jacques Garrigue and Manuel~V. Hermenegildo, editors, \emph{Functional and Logic Programming, 9th International Symposium, {FLOPS} 2008, Ise, Japan, April 14-16, 2008. Proceedings}, volume 4989 of \emph{Lecture Notes in Computer Science}, pages 195--209. Springer, 2008.
\newblock \doi{10.1007/978-3-540-78969-7\_15}.

\bibitem[Watkins et~al.(2002)Watkins, Cervesato, Pfenning, and Walker]{Watkins02tr}
Kevin Watkins, Iliano Cervesato, Frank Pfenning, and David Walker.
\newblock A concurrent logical framework {I}: Judgments and properties.
\newblock Technical Report CMU-CS-02-101, Department of Computer Science, Carnegie Mellon University, 2002.
\newblock Revised May 2003.

\end{thebibliography}

\end{document}